\newcommand{\bse}{\begin{subequations}}
\newcommand{\ese}{\end{subequations}}
\newtheorem{theorem}{Theorem}
\newtheorem{lemma}[theorem]{Lemma}
\newtheorem{proposition}[theorem]{Proposition}
\newtheorem{rhp}[theorem]{Riemann-Hilbert Problem}
\numberwithin{equation}{section}
\title[New form soliton solutions and multiple zeros solutions for higher-order KN equation]{A new form of general soliton solutions and multiple zeros solutions for a higher-order Kaup-Newell equation}
\author{JinYan Zhu}
\address[JY]{School of Mathematical Sciences, Shanghai Key Laboratory of Pure Mathematics and Mathematical Practice\\
East China Normal University \\ Shanghai 200241 \\ People's Republic of China}
\author{Yong Chen$^*$}
\address[YC]{School of Mathematical Sciences, Shanghai Key Laboratory of Pure Mathematics and Mathematical Practice \\
East China Normal University \\ Shanghai 200241 \\ People's Republic of China}
\address[YC]{College of Mathematics and Systems Science \\ Shandong University of Science and Technology \\ Qingdao 266590 \\ People's Republic of China}
\email{ychen@sei.ecnu.edu.cn}
\begin{document}

\begin{abstract}
Due to higher-order Kaup-Newell (KN) system has more complex and diverse solutions than classical second-order flow KN system, the research on it has attracted more and more attention. In this paper, we consider a higher-order KN equation with third order dispersion and quintic nonlinearity. Based on the theory of  the inverse scattering, the matrix Riemann-Hilbert problem is established. Through the dressing method, the solution matrix with simple zeros without reflection is constructed. In particular, a new form of solution is given, which is more direct and simpler than previous methods. In addition, through the determinant solution matrix, the vivid diagrams and dynamic analysis of single soliton solution and two soliton solution are given in detail. Finally,  by using the technique of limit, we construct the general solution matrix in the case of multiple zeros, and the examples of solutions for the cases of double zeros, triple zeros, single-double zeros and double-double zeros are especially shown.
\end{abstract}

\maketitle

\section{Introduction}
In \cite{KA-2018-JMP}, Abhinav et al. give an coupled equations
\begin{equation}
\begin{array}{l}
q_{t}=i q_{x x}-(4 \beta+1) q^{2} r_{x}-4 \beta q q_{x} r+\frac{i}{2}(1+2 \beta)(4 \beta+1) q^{3} r^{2}, \\
r_{t}=-i r_{x x}-(4 \beta+1) r^{2} q_{x}-4 \beta r r_{x} q-\frac{i}{2}(1+2 \beta)(4 \beta+1) q^{2} r^{3}.\label{KD}
\end{array}
\end{equation}
The system (\ref{KD}) has three famous Schr\"{o}dinger-type reductions and these three reductions had been widely studied in recent years.

When $\beta=-\frac{1}{2}$ and $r=-q^{*}$ the system (\ref{KD}) reduce to DNLS I
\begin{equation}
i q_{t}+q_{x x}-i\left(q^{2} q^{*}\right)_{x}=0,\label{DNLS1}
\end{equation}
the symbol $'*'$ represents the complex conjugate, and the subscript of $x$ (or $t$) represents the partial derivative with respect to $x$ (or $t$). Eq. (\ref{DNLS1}) is also called the Kaup-Newell (KN) equation \cite{Kaup-1978-JMP}. In recent years, the KN equation related to spectral problems, exact solutions, Hamilton structure, Painl\'{e}ve properties and other properties have been in-depth research \cite{Kaup-1978-JMP,EG-2001-PA,HA-2021-AMP,ZYB-1994-PD,WY-2018-RRP,JL-2011-PD,HS-2003-JPA}. Eq. (\ref{DNLS1}) is a typical dispersion equation, which is derived from the magnetohydrodynamic equation with Hall effect, especially describing the nonlinear Alfv\'{e}n waves in plasma physics \cite{KM-1976-PS,EM-1989-PS,AB-2018-OP}. 

When $\beta=-\frac{1}{4}$ and $r=-q^{*}$ the system (\ref{KD}) reduce to DNLS II
 \begin{equation}
i q_{t}+q_{x x}-i q q^{*} q_{x}=0,\label{CLL}
\end{equation}
which appears in optical models of ultrashort pulses and is also referred to as the Chen-Lee-Liu (CLL) equation \cite{CLL-1979-PS}.

When $\beta=0$ and $r=-q^{*}$ the system (\ref{KD}) reduce to DNLS III
\begin{equation}
i q_{t}+q_{x x}-i q^{2} q_{x}^{*}+\frac{1}{2} q^{3} q^{* 2}=0,\label{GI}
\end{equation}
 The last equation is first found by Gerdjikov and Ivanov in \cite{GI-1983-BJP} also known as the GI equation.

In \cite{EG-2001-JPA}, Fan. give the higher-order generalization of (\ref{KD}) equations
\begin{equation}
\begin{aligned}
q_{t}-&\frac{1}{4}[2 q_{x x x}-6(2 \beta-1) r q_{x}^{2}-6(4 \beta-1) q q_{x} r_{x}-6(2 \beta-1) q r q_{x x}+6(2 \beta-1)(4 \beta-1) q^{3} r r_{x}\\
+&3(8 \beta^{2}-12 \beta+3) q^{2} r^{2} q_{x}+4 \beta(2 \beta-1)(4 \beta-1) q^{4} r^{3}]=0 \\
r_{t}-&\frac{1}{4}[2 r_{x x x}+ 6(2 \beta-1) q r_{x}^{2}-6(4 \beta-1) r q_{x} r_{x}+6(2 \beta-1) q r r_{x x}+6(2 \beta-1)(4 \beta-1) q r^{3} q_{x}\\
+&3(8 \beta^{2}-12 \beta+3) q^{2} r^{2} r_{x}
-4 \beta(2 \beta-1)(4 \beta-1) q^{3} r^{4}]=0.
\end{aligned}\label{Fan2}
\end{equation}
The system (\ref{Fan2}) can be used to describe the  higher-order nonlinear effects in nonlinear optics and other fields.
Eq. (\ref{Fan2}) also has three important Schr\"{o}dinger-type reductions.

First, when $\beta=0, x \rightarrow i x, t \rightarrow i t$ and $r=-q^{*}$ the system (\ref{Fan2}) become
\begin{equation}
q_t=-\frac{1}{2}q_{xxx}+(\frac{3i}{2}|q|^2q_{x})_x+(\frac{3}{4}|q|^4q)_x,\label{hDNLS1}
\end{equation}
which can be viewed as the higher-order DNLS I or higher-order KN equation.  Eq.(\ref{hDNLS1}) also can be derived from the generalized KN hierarchy \cite{GSF-2013-JPA} under $n = 3$ and proper parameter.

Second, when $\beta=\frac{1}{4}$ and $ x \rightarrow i x, t \rightarrow i t, r=-q^{*}$ the  system (\ref{Fan2}) become
\begin{equation}
q_t=-\frac{1}{2}q_{xxx}-\frac{3}{4}i|q|^2q_{xx}-\frac{3}{4}iq^{*}q_{x}^{2}+\frac{3}{8}|q|^{4}q_{x},\label{hDNLS2}
\end{equation}
which can be viewed as the higher-order DNLS II or higher-order CLL equation.

Third, when $\beta=\frac{1}{2}$ and $ x \rightarrow i x, t \rightarrow i t$ and $r=-q^{*}$, the  system (\ref{Fan2}) become
 \begin{equation}
q_t=-\frac{1}{2}q_{xxx}+\frac{3}{2}iqq_{x}q_{x}^{*}-\frac{3}{4}|q|^4q_{x}.\label{hDNLS3}
\end{equation}
which can be regarded as the higher-order DNLS III or higher-order GI equation.

 It has been proved in \cite{EG-2001-JPA} that these equations (\ref{hDNLS1})-(\ref{hDNLS3})have multiple Hamiltonian structures and are Liouville integrable. The N-soliton solutions of Eq.(\ref{hDNLS2}) and Eq.(\ref{hDNLS3}) have been studied in \cite{JH-2018-JNMP,ZJ-2020-ar}. In this paper, we mainly consider the soliton solutions and higher-order soliton solution of system (\ref{hDNLS1}).
 In fact, there are several classical methods to obtain the soliton solutions, such as the inverse scattering (IST) method, Darboux/B\"{a}cklund transform, Hirota bilinear method, RH method\cite{Kundu-1987-PD,KI-1999-JPSJ,GB-2012-JMP,LL-2010-JMP,MA-1974-MP,XJ-2004-PRE}. Here we will use the Riemann-Hilbert(RH) method to derive the soliton solutions of (\ref{hDNLS1}) since it is more convenient  to  study the exact long-time asymptotic and large $n$ asymptotic\cite{LN-2019-ar}.

The high-order soliton solution of the NLS type has been widely concerned by many scholars in recent years. It can be used to describe the weak bound states of solitons, which may appear in the study of soliton train transmission with specific chirp and almost equal velocity and amplitude\cite{LG-1994-OL}. There are not many studies on DNLS type higher-order soliton solutions. Recently, Chen's team studied the double and triple zeros of GI equation\cite{WQ-2021-ar}, and the double zeros of higher-order KN\cite{JP-2021-ar}. Here, we study more extensive cases and give the general form of the solutions with multiple zeros.

The main content of this paper is to construct the general soliton solution matrix of the higher-order KN equation by using RH method. It is worth noting that we recover the potential $q(x,t)$ as the spectral parameter $\zeta\rightarrow0$, it effectively reduces the operation process and avoids the interference of implicit function, and the matrix form of the soliton solution is more direct. Taking the single soliton solution and the two-soliton solution as examples, the properties of the soliton are studied. Then, on the basis of the soliton solution, through a certain limit technique, the solution matrix of the high-order soliton solution of the multiple zeros is obtained.

The organization of this letter is as follows. In section 2, the inverse scattering theory is established for the $2 \times 2$ spectral problems and the corresponding matrix Riemann-Hilbert problem (RHP) is formulated. The N-soliton formula for higher-order KN equation is derived by considering the simple zeros in the RHP in section 3. In section 4, we construct the higher-order soliton matrix and obtain the general expression of the higher-order soliton, which corresponds to the multiple zeros in the RHP. The section 5 is devoted to conclusion and discussion.

\section{Inverse scattering  theory of (\ref{hDNLS1})}
The main work of this part is to study the inverse scattering problem of Eq.(\ref{hDNLS1}) and construct the corresponding RHP.

The Eq. (\ref{hDNLS1}) is Lax integrable with the linear spectral problem
\begin{equation}
Y_{x}=MY, ~~~~~~M=-i\zeta^2 \sigma_3+\zeta Q,\label{x part}
\end{equation}
\begin{equation}
Y_{t}=NY,~~~~~~N=-2i\zeta^6 \sigma_3+N_1,\label{t part}
\end{equation}
where
\begin{equation}\begin{split}
N_{1}= &2 Q \zeta^{5}-i Q^{2} \sigma_{3} \zeta^{4}+i\sigma_{3} Q_x \zeta^{3}+Q^{3} \zeta^{3} -\frac{1}{2}\left(Q Q_{x}-Q_{x} Q\right)  \zeta^{2} \\
&-\frac{3}{4} i Q^{4} \sigma_{3} \zeta^{2}-\frac{1}{2} Q_{x x} \zeta+\frac{3 i}{2} \sigma_{3} Q^{2} Q_{x} \zeta+\frac{3}{4} Q^{5} \zeta,\end{split}\label{5.8}\end{equation}
\begin{equation}
\sigma_3=\left(\begin{array}{cc}
1 & 0 \\
0 & -1
\end{array}\right),~~~Q=\left(\begin{array}{cc}
0 & q \\
-q^{*} & 0
\end{array}\right),\label{5.12}
\end{equation}
it is easy to verify
$$Q^{\dagger} = -Q, ~~~~~~~\sigma_{3} Q \sigma_{3}=-Q,$$
which plays an important role in symmetry research later, and the symbol $'\dagger'$ represents the conjugate transpose of a matrix.
In the following analysis, we assume that the potential function $q, q^{*}$ rapidly tends to zero as $x\rightarrow\pm\infty$. In this case, the solution of the boundary form can be clearly obtained
\begin{equation}Y \sim e^{(-i \zeta^2 x-2 i \zeta^{6} t)\sigma_{3}}, \text{as}~~~~x\rightarrow \infty.\label{2.17}\end{equation}

Making the following transformation
\begin{equation}
Y=Je^{(-i \zeta^2 x-2 i \zeta^{6} t)\sigma_{3}},\label{yj}
\end{equation}
The Lax pair of Eq. (\ref{x part})-(\ref{t part}) becomes
\begin{equation}
J_{x}+\mathrm{i} \zeta^{2}\left[\sigma_{3},J\right]=\zeta Q J, \label{Jx}
\end{equation}
\begin{equation}
J_{t}+2i\zeta^{6}\left[\sigma_{3}, J\right]=N_1 J,\label{Jt}
\end{equation}
where $Q,N_1$ has been given by  Eq.(\ref{5.8}),(\ref{5.12}).

In the scattering problem, the Lax equation (\ref{Jt}) of time $t$ is ignored temporarily.
 By solving Eq.(\ref{x part}) with constant variation method and using transformation (\ref{yj}), the solution of Eq.(\ref{Jx}) can be obtained, which satisfies the following integral equations
\begin{equation}
J_{M}=I+ \zeta \int_{-\infty}^{x} e^{\mathrm{i} \zeta^{2} \sigma_{3}(y-x)} Q(y) J_{M} e^{\mathrm{i} \zeta^{2} \sigma_{3}(x-y)} \mathrm{d} y,\label{J-1}
\end{equation}
\begin{equation}
J_{P}=I- \zeta \int_{x}^{+\infty} e^{\mathrm{i} \zeta^{2} \sigma_{3}(y-x)} Q(y) J_{P} e^{\mathrm{i} \zeta^{2} \sigma_{3}(x-y)} \mathrm{d} y,\label{J-2}
\end{equation}
and these two Jost solutions are satisfied the following asymptotics at large distances
\begin{equation}
J\sim I,~~as~|x|\sim\infty.\label{ii}
\end{equation}

In order to analyze the analytical properties of Jost solutions in the $\zeta$ plane,  we divide the entire $\zeta$ plane into two regions
$$
\mathbb{C}_{13}=\left\{\zeta \mid \arg \zeta \in\left(0, \frac{\pi}{2}\right) \cup\left(\pi, \frac{3 \pi}{2}\right)\right\}, \quad \mathbb{C}_{24}=\left\{\zeta \mid \arg \zeta \in\left(\frac{\pi}{2}, \pi\right) \cup\left(\frac{3 \pi}{2}, 2 \pi\right)\right\}.
$$
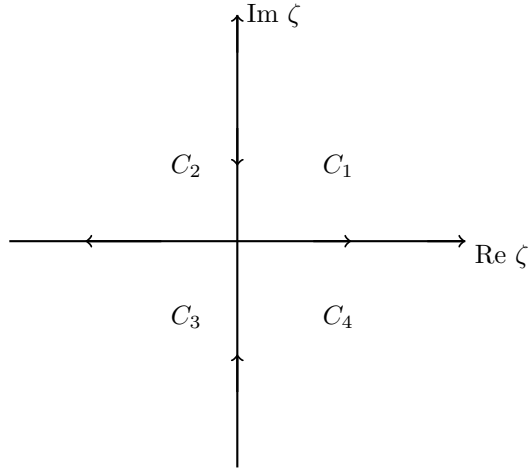
\begin{figure}[htpb]
\center
\begin{tikzpicture}\usetikzlibrary{arrows}
\coordinate [label=0: Im $\zeta$] ()at (0,3);
\coordinate [label=0: Re $\zeta$] ()at (3,-0.2);
\coordinate [label=0:] ()at (2,0.1);
\coordinate [label=0:] ()at (-2.6,0.1);
\coordinate [label=0: $C_{1}$] ()at (1,1);
\coordinate [label=0: $C_{2}$] ()at (-1,1);
\coordinate [label=0: $C_{3}$] ()at (-1,-1);
\coordinate [label=0: $C_{4}$] ()at (1,-1);
\draw[->, thick] [fill](1,0)--(1.5,0);
\draw[->, thick] (2.5,0)--(3,0);
\draw[->,thick ] (-1,0)--(-2,0);
\draw[->, thick] (0,1.5)--(0,1);
\draw[->,thick] (0,2.5)--(0,3);
\draw[->,thick] (0,-2)--(0,-1.5);
\draw[thick](0,3)--(0,-3);
\draw[thick](3,0)--(-3,0);
\end{tikzpicture}
\caption{ Definition of the $\mathbb{C}_{13}$=$C_{1}\cup C_{3}$ and $\mathbb{C}_{24}=C_{2}\cup C_{4}$}
\label{hdnls-fig.1}
\end{figure}

Dividing $J$ into columns as $J=\left(J^{(1)}, J^{(2)}\right)$, due to the structure (\ref{5.12}) of the potential $Q$, and Volterra integral equations (\ref{J-1})-(\ref{J-2}), we have
\begin{proposition}
The above Volterra integral equations exist and are unique, and have the following properties:

$\bullet$ The column vectors $J_{M}^{(1)}$ and $J_{P}^{(2)}$ are continuous for $\zeta \in \mathbb{C}_{13} \cup \mathbb{R} \cup i \mathbb{R}$ and analytic for $\zeta \in \mathbb{C}_{13}$,

$\bullet$The column vectors $J_{p}^{(1)}$ and $J_{M}^{(2)}$ are continuous for $\zeta \in \mathbb{C}_{24} \cup \mathbb{R} \cup i\mathbb{R}$ and analytical for $\zeta \in \mathbb{C}_{24}$.
\end{proposition}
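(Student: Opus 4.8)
The plan is to treat the two Jost matrices column by column, exploiting the off--diagonal structure of $Q$ in (\ref{5.12}): since $Q$ has zero diagonal, the integral equations (\ref{J-1})--(\ref{J-2}) decouple into two $2\times2$ Volterra systems, one for $\bigl(J_{11},J_{21}\bigr)^{\mathrm T}$ and one for $\bigl(J_{12},J_{22}\bigr)^{\mathrm T}$. Writing out (\ref{J-1}) componentwise, the conjugation by $e^{\mathrm i\zeta^{2}\sigma_{3}(y-x)}$ leaves the diagonal entries untouched and multiplies the off--diagonal entries by $e^{\pm 2\mathrm i\zeta^{2}(y-x)}$. Hence the equation for $J_{M}^{(1)}$ involves the kernel $e^{-2\mathrm i\zeta^{2}(y-x)}$ with $y-x\le 0$, whose modulus is $\exp\bigl(2|\zeta|^{2}\sin(2\arg\zeta)(y-x)\bigr)$; this is $\le 1$ precisely when $\sin(2\arg\zeta)\ge 0$, i.e.\ on $\mathbb{C}_{13}\cup\mathbb{R}\cup i\mathbb{R}$, and it decays as $y\to-\infty$ when $\zeta$ lies in the open sector $\mathbb{C}_{13}$. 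The analogous computation for $J_{P}^{(2)}$ (where $y-x\ge 0$ and the kernel is $e^{2\mathrm i\zeta^{2}(y-x)}$) again produces a kernel bounded on $\mathbb{C}_{13}\cup\mathbb{R}\cup i\mathbb{R}$ and decaying on $\mathbb{C}_{13}$, while for $J_{M}^{(2)}$ and $J_{P}^{(1)}$ the sign of $y-x$ forced by the domain of integration exchanges the roles of $\mathbb{C}_{13}$ and $\mathbb{C}_{24}$.

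With the kernels controlled, existence and uniqueness follow from the standard Volterra (Neumann) iteration. Writing $J=\sum_{n\ge 0}J^{(n)}$ with $J^{(0)}=I$ and $J^{(n)}$ obtained by applying the integral operator in (\ref{J-1}) (resp.\ (\ref{J-2})) to $J^{(n-1)}$, each term is an $n$--fold ordered integral; since every exponential factor has modulus at most $1$ on the relevant closed region and each iteration contributes one extra factor of $\zeta$, one obtains an estimate of the form $\|J^{(n)}(x,\zeta)\|\le \bigl(|\zeta|\,\|q\|_{L^{1}}\bigr)^{n}/n!$. The hypothesis that $q$ decays rapidly gives $q\in L^{1}(\mathbb{R})$, so the series converges absolutely and uniformly in $x$ and for $\zeta$ in compact subsets of $\overline{\mathbb{C}_{13}}$ (resp.\ $\overline{\mathbb{C}_{24}}$); its sum solves the integral equation, and uniqueness is immediate since the difference of two solutions satisfies a homogeneous Volterra equation and is killed by the same factorial estimate.

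For the analyticity and continuity claims I would argue termwise. For fixed $x$, each $J^{(n)}(x,\zeta)$ is an integral of a function that is entire in $\zeta$ (products of the exponential kernels and of the $\zeta$--independent potential $Q$), so by dominated convergence, using the uniform kernel bound together with $q\in L^{1}$, it is continuous in $\zeta$ on $\mathbb{C}_{13}\cup\mathbb{R}\cup i\mathbb{R}$ and analytic in $\zeta$ on $\mathbb{C}_{13}$ (the latter via Morera's theorem or differentiation under the integral sign); and similarly on $\mathbb{C}_{24}\cup\mathbb{R}\cup i\mathbb{R}$ for the other two columns. Since the Neumann series converges uniformly on compact subsets, the limits $J_{M}^{(1)},J_{P}^{(2)}$ (resp.\ $J_{M}^{(2)},J_{P}^{(1)}$) inherit continuity on $\mathbb{C}_{13}\cup\mathbb{R}\cup i\mathbb{R}$ (resp.\ $\mathbb{C}_{24}\cup\mathbb{R}\cup i\mathbb{R}$) and analyticity on $\mathbb{C}_{13}$ (resp.\ $\mathbb{C}_{24}$), which is exactly the assertion of the proposition.

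The only genuine obstacle is the bookkeeping: for each of the four relevant columns one must check carefully which sign occurs in $e^{\pm 2\mathrm i\zeta^{2}(y-x)}$ and verify that, combined with the sign of $y-x$ dictated by the range of integration, the modulus is $\le 1$ on exactly the claimed region; once this is pinned down everything reduces to the routine Volterra argument. It is worth noting explicitly that analyticity is asserted only on the open sectors, so no estimate uniform as $|\zeta|\to\infty$ is needed --- the bound $e^{|\zeta|\,\|q\|_{L^{1}}}$ is finite for every finite $\zeta$, which is all that the argument requires.
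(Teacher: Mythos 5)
Your proposal is correct: the column-wise decoupling, the sign check on $e^{\pm 2\mathrm{i}\zeta^{2}(y-x)}$ over the relevant half-lines of integration, and the Neumann-series bound $(|\zeta|\,\|q\|_{L^{1}})^{n}/n!$ with termwise continuity/analyticity and uniform convergence on compacta constitute the standard argument. The paper states this proposition without writing out a proof (it only appeals to the structure of $Q$ and the Volterra equations), and your argument is precisely the one it implicitly relies on, so there is nothing to contrast.
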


Through the Eq.(\ref{yj}) we know that $J_{P} E$ and $J_{M} E$ are both solutions of the linear Eq. (\ref{x part}), so they are linearly related by a matrix $S(\zeta)$
\begin{equation}
J_{M} E=J_{P} E S(\zeta), \quad \zeta \in \mathbb{R} \cup \mathrm{i} \mathbb{R},\label{2.21}
\end{equation}
where $E=e^ {-\mathrm{i} \zeta^{2} x \sigma_{3}}$ and $S(\zeta)=\left(s_{i j}\right)_{2 \times 2}$. It should be noted that
$$
\text{tr}(-i\zeta^2 \sigma_3+\zeta Q)=0,
$$
using the Abel's  formula, we can get that
\begin{equation}
(\operatorname{det} Y)_{x}=0,\label{y_x}
\end{equation}
considering transformation (\ref{yj}) has
$$
\operatorname{det} J=\operatorname{det} Y \operatorname{det}(e^{\mathrm{i} \zeta^{2} x \sigma_{3}})=\operatorname{det} Y.
$$
Reusing Eq.(\ref{y_x}) has
$$
\left(\operatorname{det} J\right)_{x}=0,
$$
which means that the $\text{det} J$ is independent of $x$, and then from the asymptotic (\ref{ii}), we know
$$
\operatorname{det} J=\lim _{|x| \rightarrow \infty} \operatorname{det} J=\operatorname{det}(\lim _{|x| \rightarrow \infty} J)=1.
$$
Taking the determinant on both sides of relation (\ref{2.21}) to get
$\operatorname{det} S(\lambda)=1$.

In order to construct the RHP, we consider the adjoint scattering equation of (\ref{Jx})
\begin{equation}
\Phi_{x}=-i\zeta^2\left[\sigma_{3}, \Phi\right]-\zeta\Phi Q,\label{adeq}
\end{equation}
it is easy to see that $J^{-1}$ is the solution of the adjoint equation (\ref{adeq}) and satisfy the boundary condition $J^{-1} \rightarrow I$ as $x \rightarrow \pm \infty,$ where the inverse matrices $J^{-1}$ as a
collection of rows
\begin{equation}
\left(J_{P}\right)^{-1}=((J_{P}^{-1})^{(1)},(J_{P}^{-1})^{(2)})^{T}, \quad \left(J_{M}\right)^{-1}=((J_{M}^{-1})^{(1)},(J_{M}^{-1})^{(2)})^{T}.
\end{equation}
Due to the structure (\ref{5.12}) of the potential $Q$, we also have
\begin{proposition}
According to the properties of Jost solution, we can deduce that the inverse matrix $J^{-1}$ has the following properties:

$\bullet$ The row vectors $(J_{P}^{-1})^{(1)}$ and $(J_{M}^{-1})^{(2)}$ are continuous for $\zeta \in \mathbb{C}_{13} \cup \mathbb{R} \cup i \mathbb{R}$ and analytic for $\zeta \in \mathbb{C}_{13}$.

$\bullet$ The rows $(J_{M}^{-1})^{(1)}$ and $(J_{P}^{-1})^{(2)}$ are continuous for $\zeta \in \mathbb{C}_{24} \cup \mathbb{R} \cup i\mathbb{R}$ and analytical for $\zeta \in \mathbb{C}_{24}$.
\end{proposition}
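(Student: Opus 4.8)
The plan is to deduce the statement from the previous proposition rather than repeat any estimates, using the relation $\det J_M=\det J_P=1$ established above. For a $2\times2$ matrix of unit determinant the inverse coincides with the adjugate, so writing $J=\big(J^{(1)},J^{(2)}\big)$ with columns $J^{(k)}=\big(J_1^{(k)},J_2^{(k)}\big)^{T}$,
\[
J^{-1}=\begin{pmatrix} J_2^{(2)} & -J_1^{(2)}\\ -J_2^{(1)} & J_1^{(1)}\end{pmatrix};
\]
thus the first row of $J^{-1}$ is assembled entirely from the entries of the second column $J^{(2)}$ of $J$, and the second row of $J^{-1}$ entirely from the first column $J^{(1)}$. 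Since for a vector of fixed size continuity (resp. analyticity) is equivalent to continuity (resp. analyticity) of each of its components, all the regularity information about $J_M^{(1)},J_M^{(2)},J_P^{(1)},J_P^{(2)}$ recorded in the previous proposition transfers componentwise: row $1$ of $J^{-1}$ inherits the domain of $J^{(2)}$, and row $2$ inherits the domain of $J^{(1)}$.

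Carrying this out, $J_P^{(2)}$ and $J_M^{(1)}$ are continuous on $\mathbb{C}_{13}\cup\mathbb{R}\cup i\mathbb{R}$ and analytic on $\mathbb{C}_{13}$, hence so are $(J_P^{-1})^{(1)}$ and $(J_M^{-1})^{(2)}$; likewise $J_M^{(2)}$ and $J_P^{(1)}$ are continuous on $\mathbb{C}_{24}\cup\mathbb{R}\cup i\mathbb{R}$ and analytic on $\mathbb{C}_{24}$, hence so are $(J_M^{-1})^{(1)}$ and $(J_P^{-1})^{(2)}$, which is exactly the assertion of the proposition.

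For completeness I would also indicate the self-contained alternative that parallels the proof of the preceding proposition: integrating the adjoint equation (\ref{adeq}) from $x=-\infty$ for $J_M^{-1}$ and from $x=+\infty$ for $J_P^{-1}$ produces Volterra integral equations of the same type as (\ref{J-1})--(\ref{J-2}), with kernels conjugated by $e^{i\zeta^{2}\sigma_{3}(x-y)}$; because $Q$ is off-diagonal, two of the four entries of each kernel are exponential-free while the other two carry $e^{\pm 2i\zeta^{2}(x-y)}$, which stay bounded over the relevant range of integration precisely when $\zeta$ lies in $\mathbb{C}_{13}\cup\mathbb{R}\cup i\mathbb{R}$ or in $\mathbb{C}_{24}\cup\mathbb{R}\cup i\mathbb{R}$, and then the Neumann iteration converges locally uniformly (using the assumed rapid decay of $q$ and $q^{*}$) with a $\zeta$-analytic limit. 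In both versions the only point needing care is combinatorial: in the adjugate argument, tracking which row of $J^{-1}$ comes from which column of $J$ and hence which of $\mathbb{C}_{13}$, $\mathbb{C}_{24}$ it inherits; in the Volterra argument, checking that conjugation by $e^{i\zeta^{2}\sigma_{3}(x-y)}$ leaves exactly the expected entries exponential-free and pairing each surviving exponential with the half-plane in $\zeta^{2}$ on which it remains bounded. No analytic ingredient beyond that of the previous proposition is required.
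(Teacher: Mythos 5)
Your proposal is correct, and your primary argument is genuinely different from the paper's. The paper (which gives essentially no detailed proof) sets the result up via the adjoint equation (\ref{adeq}): it observes that $J^{-1}$ solves that equation with $J^{-1}\to I$ as $x\to\pm\infty$, so the row-wise analyticity follows by repeating the Volterra/Neumann analysis of Proposition 1 with the conjugation $e^{-i\zeta^{2}\sigma_{3}(x-y)}(\cdot)\,e^{i\zeta^{2}\sigma_{3}(x-y)}$ acting on the right-multiplied kernel $\Phi Q$ — exactly the route you sketch as your ``self-contained alternative.'' Your main argument instead exploits $\det J_{M}=\det J_{P}\equiv 1$ (which the paper does establish just before this proposition, so the dependence is legitimate) to write $J^{-1}$ as the adjugate, so that the first row of $J^{-1}$ consists, up to sign, of the entries of the second column of $J$ and the second row of the entries of the first column; the bookkeeping you carry out ($(J_{P}^{-1})^{(1)}\leftrightarrow J_{P}^{(2)}$, $(J_{M}^{-1})^{(2)}\leftrightarrow J_{M}^{(1)}$ analytic in $\mathbb{C}_{13}$, and $(J_{M}^{-1})^{(1)}\leftrightarrow J_{M}^{(2)}$, $(J_{P}^{-1})^{(2)}\leftrightarrow J_{P}^{(1)}$ analytic in $\mathbb{C}_{24}$) matches the statement exactly. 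The adjugate route is more economical: it needs no new estimates and transfers continuity up to $\mathbb{R}\cup i\mathbb{R}$ componentwise, whereas the paper's route has the advantage of introducing the adjoint equation, which it reuses anyway for the symmetry reductions and for defining $\mathbf{P}$ in $\mathbb{C}_{24}$. Your Volterra sketch is stated a bit loosely (which kernel entry carries which exponential is not tracked explicitly), but since it is offered only as an alternative and your adjugate argument is complete, there is no gap.
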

Further, the analytical properties of the scattering data can be obtained as follows:
\begin{proposition}
Suppose that $q(x, t) \in L^1(\mathbb{R})$, then $s_{11}$ is analytic on $\mathbb{C}_{13}$, $s_{22}$  is analytic on $\mathbb{C}_{24}$; $s_{12}$ and $s_{22}$ are not analytic in $\mathbb{C}_{13}$ and $\mathbb{C}_{24}$, but are continuous to the real axis $\mathbb{R}$ and imaginary axis $\mathrm{i} \mathbb{R}$.
\end{proposition}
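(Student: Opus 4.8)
The plan is to read each entry of the scattering matrix off as a Wronskian of Jost column vectors and then import, verbatim, the analyticity and continuity already established in the two preceding propositions. Write $W[u,v]=\det[u\,|\,v]$ for two column $2$-vectors, and recall from \eqref{yj} that $Y_M:=J_M E$ and $Y_P:=J_P E$ both solve $Y_x=MY$ with $\operatorname{tr}M=0$; hence any pair of their columns has an $x$-independent Wronskian. Rewriting \eqref{2.21} as $Y_M=Y_P S$ and using $\det Y_P=\det Y_M=1$ (which was proved above via Abel's formula and the asymptotics \eqref{ii}), Cramer's rule gives $s_{11}=W[Y_M^{(1)},Y_P^{(2)}]$, $s_{22}=W[Y_P^{(1)},Y_M^{(2)}]$, $s_{12}=W[Y_M^{(2)},Y_P^{(2)}]$, $s_{21}=W[Y_P^{(1)},Y_M^{(1)}]$. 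Peeling off the diagonal exponentials of $E=e^{-\mathrm{i}\zeta^2 x\sigma_3}$, these become
\begin{align*}
s_{11} &= W[J_M^{(1)}, J_P^{(2)}], & s_{22} &= W[J_P^{(1)}, J_M^{(2)}], \\
s_{12} &= e^{2\mathrm{i}\zeta^2 x}\, W[J_M^{(2)}, J_P^{(2)}], & s_{21} &= e^{-2\mathrm{i}\zeta^2 x}\, W[J_P^{(1)}, J_M^{(1)}],
\end{align*}
and each right-hand side is in fact independent of $x$ (equivalently, one may use the product representation $S=(J_PE)^{-1}(J_ME)$ together with the row decomposition of $J_P^{-1}$ from the second proposition).

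Next I would simply invoke the column analyticity established earlier. The pair $(J_M^{(1)},J_P^{(2)})$ is jointly analytic on $\mathbb{C}_{13}$, so the Wronskian $s_{11}$, being a polynomial in their entries, is analytic on $\mathbb{C}_{13}$; symmetrically $(J_P^{(1)},J_M^{(2)})$ is jointly analytic on $\mathbb{C}_{24}$, so $s_{22}$ is analytic on $\mathbb{C}_{24}$. For the off-diagonal entries $s_{12}$ and $s_{21}$, each Wronskian pairs a column analytic only in $\mathbb{C}_{13}$ with one analytic only in $\mathbb{C}_{24}$; since these two open regions are disjoint, no analytic continuation off the contour $\mathbb{R}\cup\mathrm{i}\mathbb{R}$ is possible, so $s_{12},s_{21}$ are not analytic in either half-region. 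What does survive is continuity: by the continuity clauses of the two propositions, each factor extends continuously to $\mathbb{R}\cup\mathrm{i}\mathbb{R}$, and on that contour $\zeta^2\in\mathbb{R}$ so the prefactors $e^{\pm 2\mathrm{i}\zeta^2 x}$ have modulus one; hence $s_{12}$ and $s_{21}$ are continuous up to $\mathbb{R}$ and $\mathrm{i}\mathbb{R}$.

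The role of the hypothesis $q\in L^1(\mathbb{R})$ is exactly to make this legitimate: it is the condition under which the Neumann series for the Volterra integral equations \eqref{J-1}--\eqref{J-2} converge absolutely and uniformly on closed subsets of $\overline{\mathbb{C}_{13}}$ (resp. $\overline{\mathbb{C}_{24}}$), so that the Jost columns, and therefore the entries of $S$ built from them, are well defined and continuous all the way to $\mathbb{R}\cup\mathrm{i}\mathbb{R}$ — on that contour the kernels $e^{\mathrm{i}\zeta^2\sigma_3(y-x)}Q(y)$ are only oscillatory, not decaying, so $L^1$-control of $Q$ is what rescues the boundary values. Since that uniform estimate is already packaged into the two preceding propositions, the remainder of the argument is the elementary Wronskian bookkeeping above. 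The only genuine obstacle, were one to prove the statement from scratch, is precisely this uniform-convergence-up-to-the-boundary estimate for the Volterra kernel near $\mathbb{R}\cup\mathrm{i}\mathbb{R}$; everything downstream of it is formal.
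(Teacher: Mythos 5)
Your proposal is correct and follows essentially the same route as the paper: the paper writes $e^{-\mathrm{i}\zeta^2 x\sigma_3}S e^{\mathrm{i}\zeta^2 x\sigma_3}=J_P^{-1}J_M$ and reads off $s_{11}=(J_P^{-1})^{(1)}J_M^{(1)}$, $s_{12}=(J_P^{-1})^{(1)}J_M^{(2)}e^{2\mathrm{i}\zeta^2 x}$, etc., then invokes Propositions 1 and 2 exactly as you do; since $\det J_P=1$, your Wronskian expressions coincide entry-by-entry with these row--column products, so the Cramer/Wronskian packaging is only a cosmetic variant (which you yourself note). Your added remarks on the $x$-independence of the Wronskians and on the role of $q\in L^1$ in the Volterra estimates are consistent with, and slightly more explicit than, the paper's one-line appeal to the earlier propositions.
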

\begin{proof}
 The scattering matrix can be rewritten as:
\begin{equation}
e^ {-\mathrm{i} \zeta^{2} x \sigma_{3}} S(\zeta)e^ {\mathrm{i} \zeta^{2} x \sigma_{3}}=J_P^{-1}J_M
=\left(\begin{array}{c}
\left(J_{P}^{-1}\right)^{(1)} \\ \left(J_{P}^{-1}\right)^{(2)}
\end{array}\right)\left(J_{M}^{(1)}, J_{M}^{(2)}\right)=\left(\begin{array}{cc}
\left(J_{P}^{-1}\right)^{(1)} J_{M}^{(1)} & \left(J_{P}^{-1}\right)^{(1)} J_{M}^{(2)}   \\
\left(J_{P}^{-1}\right)^{(2)} J_{M}^{(1)} & \left(J_{P}^{-1}\right)^{(2)} J_{M}^{(2)}
\end{array}\right), \quad \zeta \in \mathbb{R} \cup \mathrm{i} \mathbb{R}.
\end{equation}
The elements corresponding to the matrices on both sides can be written clearly
$$
s_{11}=\left(J_{P}^{-1}\right)^{(1)} J_{M}^{(1)},~~~s_{12}=\left(J_{P}^{-1}\right)^{(1)} J_{M}^{(2)}e^{2\mathrm{i} \zeta^{2} x},
$$
$$
s_{21}=\left(J_{P}^{-1}\right)^{(2)} J_{M}^{(1)}e^{-2\mathrm{i} \zeta^{2} x},~~~s_{12}=\left(J_{P}^{-1}\right)^{(2)} J_{M}^{(2)}.
$$
According to proposition 1 and Proposition 2, it's easy to know $s_{11}$ is analytic on $\mathbb{C}_{13}$, $s_{22}$  is analytic on $\mathbb{C}_{24}$; $s_{12}$ and $s_{22}$ are not analytic in $\mathbb{C}_{13}$ and $\mathbb{C}_{24}$, but are continuous to the real axis $\mathbb{R}$ and imaginary axis $\mathrm{i} \mathbb{R}$.
\end{proof}

Hence,  we can construct two matrix functions $\mathbf{P}(\zeta, x)$ which are analytic for $\zeta\in\mathbb{C}_{13}\cup\mathbb{C}_{24}$,
\begin{equation}
\mathbf{P}(\zeta, x):=\left\{
\begin{aligned}&\left[(J_{M}^{(1)}(\zeta, x), J_{P}^{(2)}(\zeta, x)\right],\,\,\, \zeta\in \mathbb{C}_{13}\\
&\left[ (J_{M}^{-1})^{(1)}(\zeta, x), (J_{P}^{-1})^{(2)}(\zeta, x)\right],\,\,\, \zeta\in \mathbb{C}_{24}
\end{aligned}
\right.
\end{equation}
and $\text{det}\mathbf{P}=s_{11}$, when $\zeta\in \mathbb{C}_{13}$, $\text{det}\mathbf{P}=\hat{s}_{11}$, when $\zeta\in \mathbb{C}_{24}$.
$\hat{s}_{11}$ is the first element of $S^{-1}$.

To find the boundary condition of $\mathbf{P}$, we consider the following asymptotic expansion as $\zeta \rightarrow 0,$
\begin{equation}
\mathbf{P}=\mathbf{P}^{(0)}+\mathbf{ P}^{(1)}\zeta+ \mathbf{P}^{(2)}\zeta^{2}+O(\zeta^{3}).\label{2.23}
\end{equation}
Substituting (\ref{2.23}) into (\ref{Jx}) and equating terms with like powers of $\zeta$, we find
$$
\mathbf{P}^{(0)}_{x}=0.
$$
It can be seen from (\ref{J-1}) and (\ref{J-2})
$$
J|_{(\zeta=0)}=I,
$$
so we have
\begin{equation}
\mathbf{P} \rightarrow I,~~~\zeta\rightarrow 0.\label{2.24}
\end{equation}
Then the Riemann-Hilbert problem of the higher-order KN equation is
\begin{rhp}
The matrix function $\mathbf{P}(\zeta; x)$ has the following properties:

$\bullet$ $\mathbf{Analyticity}: \mathbf{P}(\zeta; x, t)$ is analytic function in $\zeta\in\mathbb{C}_{13}\cup\mathbb{C}_{24}$;

$\bullet$  $\mathbf{Jump\quad Condition}$: \begin{equation} \mathbf{P}_{+}(\zeta; x)=\mathbf{P}_{-}(\zeta;x)G(\zeta),~~\quad \zeta \in \mathbb{R} \cup \mathrm{i} \mathbb{R}.\label{RHP}
    \end{equation}

$\bullet$  $\mathbf{Normalization}: \mathbf{P}(\zeta; x)=I+O(\zeta),~~~~~\text{as } \zeta\rightarrow 0.$
\end{rhp}
Where
\begin{equation}
G=E\left(\begin{array}{cc}
1 & \hat{s}_{12}\\
s_{21} & 1  \\
\end{array}\right) E^{-1}.\label{G1}
\end{equation}

Next, we consider the symmetric properties of Jost solutions and scattering data, so that we can consider interesting reduction.
\begin{proposition}
There are two symmetries of the Jost solutions and scattering data:

$\bullet$ The first symmetry reduction
\begin{equation}(J(x, \zeta^{*}))^{\dagger}=J^{-1}(x, \zeta),\label{2.28}\end{equation}
\begin{equation}(\mathbf{P}(\zeta^{*}))^{\dagger}=\mathbf{P}^{-1}(\zeta),\label{2.29}\end{equation}
\begin{equation}S^{\dagger}\left(\zeta^{*}\right)=S^{-1}(\zeta).\label{2.30}\end{equation}
$\bullet$ The second symmetry reduction
\begin{equation}
J(\zeta)=\sigma_{3} J(-\zeta) \sigma_{3},\label{2.31}
\end{equation}
\begin{equation}
\mathbf{P}(-\zeta)=\sigma_{3} \mathbf{P}(\zeta) \sigma_{3},\label{2.32}
\end{equation}
\begin{equation}
S(-\zeta)=\sigma_{3} S(\zeta) \sigma_{3}.\label{2.33}
\end{equation}
\end{proposition}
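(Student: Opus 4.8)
The plan is to establish both symmetries first for the Jost solutions $J_M$ and $J_P$ via the standard uniqueness argument, and then propagate them to $\mathbf{P}$ and to the scattering matrix $S$ by purely algebraic manipulations.

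For the first symmetry I would start from the $x$-equation (\ref{Jx}) written in the form $J_x=-\mathrm{i}\zeta^2\sigma_3 J+\mathrm{i}\zeta^2 J\sigma_3+\zeta QJ$, replace $\zeta$ by $\zeta^*$, and take the conjugate transpose. Using $\overline{-\mathrm{i}(\zeta^*)^2}=\mathrm{i}\zeta^2$, $\sigma_3^\dagger=\sigma_3$ and the reduction $Q^\dagger=-Q$, one sees that $\Psi(x,\zeta):=\big(J(x,\zeta^*)\big)^\dagger$ satisfies precisely the adjoint equation (\ref{adeq}). Since $J^{-1}$ also solves (\ref{adeq}) with the same normalization ($J^{-1}\to I$ at the relevant end of the line, a property $\Psi$ inherits from $J\to I$), the existence and uniqueness of the solutions of the Volterra integral equations (\ref{J-1})--(\ref{J-2}) forces $\Psi=J^{-1}$, which is (\ref{2.28}) applied separately to $J_M$ and $J_P$. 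Then (\ref{2.30}) follows by substituting (\ref{2.28}) into $S(\zeta)=E^{-1}J_P^{-1}J_M E$ (rearranged from (\ref{2.21})) and using $\big(E(\zeta^*)\big)^\dagger=E^{-1}(\zeta)$ together with $\big(J_P^{-1}(\zeta^*)\big)^\dagger=J_P(\zeta)$; and (\ref{2.29}) follows by reading off the appropriate columns of $J_M,J_P$ (resp.\ rows of the inverses) and noting that $\zeta\mapsto\zeta^*$ interchanges $\mathbb{C}_{13}$ and $\mathbb{C}_{24}$, so that the two branches of the piecewise definition of $\mathbf{P}$ get swapped exactly as needed.

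For the second symmetry I would again take (\ref{Jx}) in the same form, replace $\zeta$ by $-\zeta$ (noting $(-\zeta)^2=\zeta^2$), and conjugate by $\sigma_3$ on both sides. Using $\sigma_3^2=I$ and the reduction $\sigma_3 Q\sigma_3=-Q$, one checks that $\widetilde J(x,\zeta):=\sigma_3 J(x,-\zeta)\sigma_3$ solves the same equation (\ref{Jx}) as $J(x,\zeta)$, with the same boundary behaviour $\widetilde J\to\sigma_3 I\sigma_3=I$; uniqueness then gives (\ref{2.31}). Propagation to $S$ through $S(\zeta)=E^{-1}J_P^{-1}J_M E$ is immediate since $E(-\zeta)=E(\zeta)$ and $\sigma_3$ commutes with the diagonal matrix $E$, which yields (\ref{2.33}); and (\ref{2.32}) follows by tracking columns and rows, using that $\zeta\mapsto-\zeta$ preserves each of $\mathbb{C}_{13}$ and $\mathbb{C}_{24}$ (it merely rotates the argument by $\pi$).

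The routine part — the conjugation of scalars such as $-\mathrm{i}(\zeta^*)^2$ and the shuffling of $\sigma_3$ factors — is straightforward. The only point requiring care, which I would regard as the main obstacle, is the bookkeeping in passing from $J_M$, $J_P$ (and their inverses) to the piecewise-defined matrix $\mathbf{P}$: for each involution one must verify which of $\mathbb{C}_{13},\mathbb{C}_{24}$ is fixed and which end of the line ($+\infty$ or $-\infty$) is respected by the Jost normalization, and then check that the column block $[J_M^{(1)},J_P^{(2)}]$ and the corresponding inverse-row block transform consistently, so that (\ref{2.29}) and (\ref{2.32}) hold throughout the domain of analyticity of $\mathbf{P}$ and extend continuously to $\mathbb{R}\cup\mathrm{i}\mathbb{R}$.
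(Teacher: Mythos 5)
Your proposal is correct and follows essentially the same route as the paper: take the $x$-equation with $\zeta\mapsto\zeta^{*}$ (resp. $\zeta\mapsto-\zeta$), apply the conjugate transpose (resp. conjugation by $\sigma_{3}$) together with $Q^{\dagger}=-Q$ (resp. $\sigma_{3}Q\sigma_{3}=-Q$), and invoke uniqueness under the boundary normalization to identify the transformed solution with $J^{-1}$ (resp. $J$), then transfer the relations to $\mathbf{P}$ and $S$ through their construction and the scattering relation. Your extra bookkeeping on how the branches of $\mathbf{P}$ and the factors $E$ behave under $\zeta\mapsto\zeta^{*}$ and $\zeta\mapsto-\zeta$ is a slightly more explicit version of what the paper simply asserts, but it is not a different method.
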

\begin{proof}
For the first symmetric case, replacing $\zeta$ with $\zeta^{*}$, and then take the conjugate transpose of the Eq.(\ref{Jx}) to get
\begin{equation}
(J^{\dagger}(x,\zeta^{*}))_{x}=-\mathrm{i} \zeta^{2}[\sigma_{3}, J^{\dagger}(x,\zeta^{*})]+\zeta J^{\dagger}(x,\zeta^{*}) Q^{\dagger},
\end{equation}
owing to $Q^{\dagger} = -Q$, so the above equation is
\begin{equation}
(J^{\dagger}(x,\zeta^{*}))_{x}=-\mathrm{i} \zeta^{2}[\sigma_{3}, J^{\dagger}(x,\zeta^{*})]- \zeta J^{\dagger}(x,\zeta^{*}) Q,\label{2.27}
\end{equation}
Comparing with Eq.(\ref{adeq}), it is found that $J^{-1}(x,\zeta)$ and $J^{\dagger}(x,\zeta^{*})$ satisfy the same equation form, and then according to the boundary conditions at $x \rightarrow \pm \infty$, we know that
$$(J(x, \zeta^{*}))^{\dagger}=J^{-1}(x, \zeta).$$
Notice that the $\mathbf{P}$ that we construct is part of the Jost solutions, so there must be also
$$
(\mathbf{P}(\zeta^{*}))^{\dagger}=\mathbf{P}^{-1}(\zeta).
$$
In addition, in view of the scattering relation (\ref{2.21}) between $J_{M}$ and $J_{P}$, we see that $S$ also satisfies the involution property
$$S^{\dagger}\left(\zeta^{*}\right)=S^{-1}(\zeta).$$

For the second symmetry, replacing $\zeta$ with $-\zeta$, and both sides of the equation are multiplied by $\sigma_{3}$,
$$
\sigma_{3}J_{x}(-\zeta)\sigma_{3}+\mathrm{i} \zeta^{2}\left[\sigma_{3},\sigma_{3}J(-\zeta)\sigma_{3}\right]=-\zeta \sigma_{3}Q J(-\zeta)\sigma_{3},
$$
due to $\sigma_{3} Q \sigma_{3}=-Q$, the above equation can be reduced to
$$
\sigma_{3}J_{x}(-\zeta)\sigma_{3}+\mathrm{i} \zeta^{2}\left[\sigma_{3},\sigma_{3}J(-\zeta)\sigma_{3}\right]=\zeta Q\sigma_{3} J(-\zeta)\sigma_{3}.
$$
It is easy to find that $J(\zeta)$ and $J(-\zeta)$ satisfy the same equation, so there is
\begin{equation}J(\zeta)=\sigma_{3} J(-\zeta) \sigma_{3},\end{equation}
it follows that
\begin{equation}
\mathbf{P}(-\zeta)=\sigma_{3} \mathbf{P}(\zeta) \sigma_{3},
\end{equation}
and
\begin{equation}
S(-\zeta)=\sigma_{3} S(\zeta) \sigma_{3}.
\end{equation}
\end{proof}

From the (\ref{2.30})and (\ref{2.33}), we obtain the relations
\begin{equation}
s_{11}^{*}\left(\zeta^{*}\right)=\hat{s}_{11}(\zeta), \quad s_{21}^{*}\left(\zeta^{*}\right)=\hat{s}_{12}(\zeta), \quad s_{12}^{*}\left(\zeta^{*}\right)=\hat{s}_{21}(\zeta),  \quad \zeta \in \mathbb{R} \cup i \mathbb{R},\label{2.34}
\end{equation}
and
\begin{equation}
s_{1 1}(\zeta)=s_{1 1}(-\zeta), ~~s_{2 2}(\zeta)=s_{2 2}(-\zeta),~~
s_{1 2}(-\zeta)=-s_{1 2}(\zeta), ~~~s_{2 1}(-\zeta)=-s_{2 1}(\zeta),\quad \zeta \in \mathbb{R} \cup i \mathbb{R}.\label{2.37}
\end{equation}
Thus $s_{11}(\lambda)$ is an even function, and each zero $\zeta_{k}$ of $s_{11}$ is accompanied with zero $-\zeta_{k}$. Similarly, $\hat{s}_{11}$ has two zeros $\pm\zeta_{k}^{*}$.

\subsection{Solvability of RHP problem}\

In general, if the $\operatorname{det} \mathbf{P}(\zeta)\neq 0$ of the RHP, the RHP is considered to be regular, its solution is unique, and can be given by using Plemelj formula\cite{YJK-2010-SIAM}. But more often than not they are non-regular, where $\operatorname{det} \mathbf{P}(\zeta)= 0$, i.e,$s_{11}(\pm\zeta_k)=0$ and $\hat{s}_{11}(\pm\bar{\zeta}_k)=0$ at certain discrete locations, $\pm\zeta_k$ and $\pm\bar{\zeta_k}$ are called zeros. Here we first consider the case of simple zeros $\{\pm\zeta_{k} \in \mathbb{C}_{13},1 \leq k \leq N\}$ and $\{\pm\bar{\zeta}_{k} \in \mathbb{C}_{24}, 1 \leq k \leq N\},$ where $N$ is the number of these zeros. These zeros are known from the relation (\ref{2.34}) above
$$
s_{11}(\zeta_k)=\hat{s}_{11}^{*}(\zeta^{*}_k)=0,~~\hat{s}_{11}(\bar{\zeta}_k)=0,
$$
so there are
\begin{equation}\bar{\zeta}_{k}=\zeta_{k}^{*}. \label{2.38}\end{equation}

In this case, both $\operatorname{ker}\left(\mathbf{P}\left(\pm \zeta_{k}\right)\right)$ are one-dimensional and spanned by single column vector $\left|v_{k}\right\rangle$ and single row vector $\left\langle v_{k}\right|,$ respectively, thus
\begin{equation}\mathbf{P}\left(\zeta_{k}\right) \left|v_{k}\right\rangle=0, \quad \left\langle v_{k}\right| \mathbf{P}\left(\zeta_{k}^{*}\right)=0, \quad\zeta_{k}\in \mathbb{C}_{13}, \quad 1 \leq k \leq N.\label{2.40}\end{equation}
By the symmetry relation (\ref{2.29}), it is easy to get
\begin{equation}\left|v_{k}\right\rangle=\left\langle v_{k}\right|^{\dagger}.\label{2.39}\end{equation}

Regarding this non-regular RHP (\ref{RHP}) under the canonical normalization condition, its solution is also unique. Next we construct a matrix function $\Gamma(x,t,\zeta)$ which could cancel all the zeros of $\mathbf{P}$.  From the relations (\ref{2.34}) and (\ref{2.37}), we should construct a matrix $\Gamma_{k}$ whose determinant is
\begin{equation}\operatorname{det} \Gamma_{k}(\zeta)=\frac{\zeta^{2}-\zeta_{k}^{2}}{\zeta^{2}-\zeta_k^{* 2}}.\label{L6}
\end{equation}
From the above properties (\ref{2.29}), (\ref{2.32}) and (\ref{L6}), we could readily construct the explicit form
for the matrix
\begin{equation}
\Gamma_{k}(\zeta)=I+\frac{A_{k}}{\zeta-\zeta_{k}^{*}}-\frac{\sigma_{3} A_{k} \sigma_{3}}{\zeta+\zeta_{k}^{*}}, \quad \Gamma_{k}^{-1}(\zeta)=I+\frac{A_{k}^{\dagger}}{\zeta-\zeta_{k}}-\frac{\sigma_{3} A_{k}^{\dagger} \sigma_{3}}{\zeta+\zeta_{k}}, \quad \zeta_{k}\in \mathbb{C}_{13}, \quad k=1,2, \ldots, N\label{L7}
\end{equation}
where
\begin{equation}
A_{k}=\frac{\zeta_{k}^{* 2}-\zeta_{k}^{2}}{2}\left(\begin{array}{cc}
\alpha_{k}^{*} & 0 \\
0 & \alpha_{k}
\end{array}\right)\left|w_{k}\right\rangle\left\langle w_{k}\right|,~~~
\alpha_{k}^{-1}=\langle w_{k}|\left(\begin{array}{cc}
\zeta_{k} & 0 \\
0 & \zeta_{k}^{*}
\end{array}\right)| w_{k}\rangle,
\end{equation}
\begin{equation}
\left|w_{k}\right\rangle=\Gamma_{k-1}\left(\zeta_{k}\right) \cdots \Gamma_{1}\left(\zeta_{k}\right)\left|v_{k}\right\rangle, \quad\left\langle w_{k}|=| w_{k}\right\rangle^{\dagger},\label{L10}
\end{equation}
then det $\mathbf{P}\Gamma^{-1}_{k}\neq 0$ at points $\pm \zeta_{k}$ and det $\Gamma^{-1}_{k}\mathbf{P}\neq 0$ at points $\pm \zeta_{k}^{*}.$ Introducing
\begin{equation}
\Gamma(\zeta)=\Gamma_{N}(\zeta) \Gamma_{N-1}(\zeta) \cdots \Gamma_{1}(\zeta),\label{L8}
\end{equation}
\begin{equation}
\Gamma^{-1}(\zeta)=\Gamma_{1}^{-1}(\zeta)\Gamma_{2}^{-1}(\zeta) \cdots \Gamma_{N}^{-1}(\zeta),\label{L9}
\end{equation}
and the analytic solutions may be represented as
\begin{equation}
\mathbf{P}=\tilde{\mathbf{P}}\Gamma .\label{2.45}
\end{equation}
Therefore, $\Gamma(x,t,\zeta)$ accumulates all zero of the RHP, and then  the RHP of the higher-order KN equation without singularity is
\begin{rhp}
The matrix function $\mathbf{\tilde{P}}(\zeta; x)$ has the following properties:

$\bullet$ $\mathbf{Analyticity}: \mathbf{\tilde{P}}(\zeta; x, t)$ is analytic function in $\zeta\in\mathbb{C}_{13}\cup\mathbb{C}_{24}$;

$\bullet$$\mathbf{Jump\quad Condition}$: \begin{equation} \mathbf{\tilde{P}}_{+}(\zeta; x)=\mathbf{\tilde{P}}_{-}(\zeta;x)\Gamma G \Gamma^{-1}(\zeta),~~\quad \zeta \in \mathbb{R} \cup \mathrm{i} \mathbb{R}.\label{RHP1}
    \end{equation}

$\bullet$ $\mathbf{Asymptotic behaviors}: ~~~~~\mathbf{\tilde{P}}(\zeta; x)=\mathbf{\tilde{P}}_{0}+O(\zeta),~~~~~\text{as } \zeta\rightarrow 0.$
\end{rhp}
The form of $G$ has been given by equation (\ref{G1}). From Eq.(\ref{2.45}) we have
\begin{equation}
\mathbf{\tilde{P}}_{0}=(\left.\Gamma\right|_{\zeta=0})^{-1}.\label{p0}
\end{equation}

\subsection{ Scattering data evolution}\

From the solutions of the RHP, we see that the scattering data needed
to solve this RHP and reconstruct the potential are
$$
\{s_{21}, \hat{s}_{12}, \zeta \in \mathbb{R} \cup \mathrm{i} \mathbb{R}; \pm \zeta_{k}, \pm \zeta_{k}^{*}, |v_{k}\rangle, \langle v_{k}|,1\leq k\leq N\}.
$$
Since $J$ satisfies the temporal equation (\ref{Jt}) of the Lax pair and the relation (\ref{2.21}), then according to the evolution property (\ref{2.21}) and $Q \rightarrow 0, V\rightarrow 0$ as $|x| \rightarrow \infty,$ we have
$$
S_{t}+2 \mathrm{i} \zeta^{6}\left[\sigma_{3}, S\right]=0,\label{2.41}
$$
the evolutions of the entries of the scattering matrix $S$ satisfy
\begin{equation}
s_{11, t}=s_{22, t}=0,~~
s_{12}(t ; \zeta)=s_{12}(0 ; \zeta) \exp \left(-4 \mathrm{i} \zeta^{6} t\right), \quad s_{21}(t ; \zeta)=s_{21}(0 ; \zeta) \exp \left(4 \mathrm{i} \zeta^{6} t\right).\label{L16}
\end{equation}

Differentiating both sides of the first equation of (\ref{2.40}) with respect to $x$ and $t,$ and recalling the Lax (\ref{Jx})-(\ref{Jt}) we have
$$
\mathbf{P}(\zeta_{k} ; x)\left(\frac{d|v_{k}\rangle}{d x}+\mathrm{i} \zeta^{2} \sigma_{3}|v_{k}\rangle\right)=0, \quad \mathbf{P}(\zeta_{k} ; x)\left(\frac{d|v_{k}\rangle}{d t}+2 \mathrm{i} \zeta^{6} \sigma_{3}|v_{k}\rangle\right)=0,~~~\zeta_{k}\in \mathbb{C}_{13}.
$$
It concludes that
$$
\left|v_{k}\right\rangle=e^{-\mathrm{i} \zeta_{k}^{2} \sigma_{3} x-2 \mathrm{i} \zeta_{k}^{6} \sigma_{3} t}\left|v_{k0}\right\rangle,
$$
where $v_{k0}=\left.v_{k}\right|_{x=0}$.

\section{N Soliton Solutions}
In this part, we mainly obtain the potential $q$. The expansion of $\mathbf{P}(\zeta)$  with $\zeta \rightarrow 0$,
\begin{equation}\mathbf{P}(\zeta)=I+\mathbf{P}^{(1)} \zeta+\mathbf{P}^{(2)} \zeta^{2}+O\left(\zeta^{2}\right).\label{3.1}
\end{equation}
Substituting the expansion into Eq. (\ref{Jx}), the potential matrix function can be obtained by comparing the coefficients
\begin{equation}
Q=\mathbf{P}_{x}^{(1)},\label{2.43}
\end{equation}
from this formula, we can get the potential $q(x,t)$. It is well known that the soliton solutions correspond to the vanishing of scattering coefficients, $G=I, \hat{G}=0$. Thus, we intend to solve the corresponding RHP(\ref{RHP1}).

According to equations (\ref{2.45}) and (\ref{p0}), we can consider the following expansion form
\begin{equation}\mathbf{P}(x, t ; \zeta)=(\Gamma|_{\zeta=0})^{-1}(\Gamma|_{\zeta=0}+\Gamma^{(1)}(x, t)\zeta+O(\zeta)),\end{equation}
which gives $ \mathbf{P}^{(1)}=(\Gamma|_{\zeta=0})^{-1}\Gamma^{(1)}(x, t)$. Below, the main effort is to find an explicit expression for
$(\Gamma|_{\zeta=0})^{-1}\Gamma^{1}(x, t).$ In fact, the form of $\Gamma$ from (\ref{L8})and (\ref{L9}) have more compact form
\begin{equation}
\Gamma(\zeta)=I+\sum_{k=1}^{N}\left[\frac{B_{k}}{\zeta-\zeta_{k}^{*}}-\frac{\sigma_{3} B_{k} \sigma_{3}}{\zeta+\zeta_{k}^{*}}\right],\label{gamma0}
\end{equation}
and
\begin{equation}
\Gamma^{-1}(\zeta)=I+\sum_{k=1}^{N}\left[\frac {B_{k}^{\dagger} }{\zeta-\zeta_{k}}-\frac{\sigma_{3}  B_{k}^{\dagger} \sigma_{3}}{\zeta+\zeta_{k}}\right],\label{gamma2}
\end{equation}
with $B_{k}=\left|z_{k}\right\rangle\langle v_{k}|$. To determine the form of matrix $B_{k}$, we consider $\Gamma(\zeta)\Gamma^{-1}(\zeta)=I$, we have
$$
\operatorname{Res}_{\zeta=\zeta_{j}}\Gamma(\zeta)\Gamma^{-1}(\zeta)=\Gamma(\zeta_j)B_{j}^{\dagger} =0,
$$
and it yields
\begin{equation}
\left[I+\sum_{k=1}^{N}\left(\frac{\left|z_{k}\right\rangle\left\langle v_{k}\right| }{\zeta_{j}-\zeta_{k}^{*}}-\frac{\sigma_{3}\left|z_{k}\right\rangle\left\langle v_{k}\right|  \sigma_{3}}{\zeta_{j}+\zeta_{k}^{*}}\right)\right]\left|v_{j}\right\rangle=0, \quad j=1,2, \ldots N
\end{equation}
it's easy to figure out
\begin{equation}
\left|z_{k}\right\rangle_{1} =\sum_{j=1}^{N} (M^{-1})_{jk}\left|v_{j}\right\rangle_{1},
\end{equation}
where $\left|z_{k}\right\rangle_{l}$ denotes the $l-$th element of $\left|z_{k}\right\rangle$, matrix $M$ is defined as
\begin{equation}
M_{jk}=\frac{\left\langle v_{k}\left|\sigma_{3}\right| v_{j}\right\rangle}{\zeta_{j}+\zeta_{k}^{*}}-\frac{\langle v_{k}\mid v_{j}\rangle}{\zeta_{j}-\zeta_{k}^{*}}.\label{m}
\end{equation}

Then we have
$$
(\Gamma|_{\zeta=0})=I-\sum_{j=1}^{N}\left[\frac{B_{j}+\sigma_{3} B_{j} \sigma_{3}}{\zeta_{j}^{*}}\right],
$$

$$
\Gamma^{(1)}(x,t)=-\sum_{j=1}^{N} \frac{B_{j}-\sigma_{3} B_{j} \sigma_{3}}{\zeta_{j}^{*2}}.
$$
From these equations enable us to have
$$
\mathbf{P}^{(1)}=(\Gamma|_{\zeta=0})^{-1}\Gamma^{(1)}(x,t)=\left(I-\sum_{j=1}^{N}\left[\frac{B_{j}+\sigma_{3} B_{j} \sigma_{3}}{\zeta_{j}^{*}}\right]\right)^{-1}\sum_{j=1}^{N} \frac{\sigma_{3} B_{j} \sigma_{3}-B_{j}}{\zeta_{j}^{*2}},
$$
by Eq. (\ref{2.43}), we can obtain that the potential function $q(x,t)$ is
\begin{equation}
q(x,t)=\left((1-\sum_{j,k=1}^{N}\frac{2(M^{-1})_{jk}|v_{k}\rangle_{1}\langle v_{j}|_{1}}{\zeta_{j}^{*}})^{-1}(\sum_{j,k=1}^{N}\frac{-2(M^{-1})_{jk}|v_{k}\rangle_{1}\langle v_{j}|_{2}}{\zeta_{j}^{*2}})\right)_{x},\label{soliton}
\end{equation}
where $M$ has been given by Eq. (\ref{m}). Notice that $M^{-1}$ can be expressed as the transpose of $M's$ cofactor matrix divided by det$M$.
Hence the solution (\ref{soliton}) can be rewritten as
\begin{equation}
q(x,t)=\left(\frac{2\frac{detF}{detM}}{1+2\frac{detG}{detM}}\right)_{x}=\left(\frac{2detF}{detM+2detG}\right)_x,\label{qsoliton}
\end{equation}
where
$$
F=\left(\begin{array}{ccccc}
M_{11} & M_{12} & \cdots & M_{1 n} & |v_{1}\rangle_1 \\
M_{21} & M_{22} & \cdots & M_{2 n} & |v_{2}\rangle_1 \\
\vdots & \vdots & \ddots & \vdots & \vdots \\
M_{{n}1} & M_{{n} 2} & \cdots & M_{nn} & |v_{n}\rangle_1  \\
\frac{\langle v_{1}|_2}{{\zeta_1}^{*2}} & \frac{\langle v_{2}|_2}{{\zeta_2}^{*2}} & \cdots & \frac{\langle v_{n}|_2}{{\zeta_n}^{*2}} & 0
\end{array}\right),~~~~~
G=\left(\begin{array}{ccccc}
M_{11} & M_{12} & \cdots & M_{1 n} & |v_{1}\rangle_1 \\
M_{21} & M_{22} & \cdots & M_{2 n} & |v_{2}\rangle_1 \\
\vdots & \vdots & \ddots & \vdots & \vdots \\
M_{{n}1} & M_{{n} 2} & \cdots & M_{nn} & |v_{n}\rangle_1  \\
\frac{\langle v_{1}|_1}{{\zeta_1}^{*}} & \frac{\langle v_{2}|_1}{{\zeta_2}^{*}} & \cdots & \frac{\langle v_{n}|_1}{{\zeta_n}^{*}} & 0
\end{array}\right).
$$
To get the explicit $N$-soliton solutions, we may take $v_{k0}=(a_k,b_k)^{T}$, then
$$
\left|v_{k}\right\rangle=\left(\begin{array}{c}
a_{k}e^{\theta_{k}} \\
b_{k}e ^{-\theta_{k}}\\
\end{array}\right),~~~~~\left\langle v_{k}\right|=(\begin{array}{cc}
a_{k}^{*}e^{\theta^{*}_{k}} &b_{k}^{*}e ^{-\theta^{*}_{k}}
\end{array}),
$$
where $\theta_{k}=-i \zeta_{k}^2 x-2 i \zeta_{k}^{6}t$.

In what follows, we will take single soliton and two-soliton solutions as examples to study the properties of solitons in more detail. For convenience, let $\zeta_{j}=\zeta_{jR}+i \zeta_{jI},$
$$
\begin{array}{l}
\theta_{jR}=2 m_{j}(x-(8 m_j^{2}-6 \beta_{j}^2) t), ~~~~~\theta_{jI}=-\beta_{j} x-2(\beta_{j}^{3}-12 m_{j}^{2}\beta_j) t,\\
m_{j}=\zeta_{jR} \zeta_{jI}, \quad \beta_{j}=\zeta_{jR}^2-\zeta_{jI}^{2},
\end{array}
$$
where $\zeta_{jR} ,\zeta_{jI}$ are the real and imaginary parts of $\zeta_{j}$.

\subsection{ Single-soliton solution}\

For $N=1$, taking the discrete spectrum point $\pm\zeta_1$ and $\pm\zeta_1^{*}$, then using the formula (\ref{soliton}) to directly calculate, it can be seen
\begin{equation}
q(x, t)=\frac{(\zeta_{1}^{2}-\zeta_{1}^{*2})}{|\zeta_{1}|^{2} } \left( \frac{a_{1} b_{1}^{*} e^{\theta_{1}-\theta_{1}^{*}}}{\zeta_{1}^{*}|b_{1}|^{2} e^{-\left(\theta_{1}+\theta_{1}^{*}\right)}+\zeta_{1}|a_{1}|^{2} e^{\theta_{1}+\theta_{1}^{*}}}\right)_{x},\label{L21}
\end{equation}
or equal to
\begin{equation}
q(x,t)=8a_1b_1^{*}\zeta_{1R}\zeta_{1I}\frac{\zeta_{1}|b_1|^2e^{- 2\theta_{1R}}+\zeta_{1}^{*}|a_1|^2e^{ 2\theta_{1R}}}{(\zeta_{1}^{*}|b_1|^2e^{- 2\theta_{1R}}+\zeta_{1}|a_1|^2e^{ 2\theta_{1R}})^2}e^{2i\theta_{1I}},\label{L32}
\end{equation}


The velocity for the single soliton is $v =8 \zeta_{1R}^2\zeta_{1I}^2-6 (\zeta_{1R}^2-\zeta_{1I}^2)^2$. The center position
for $|q|$ locates on the line
 $$x-vt-\frac{1}{4 \zeta_{1R}^2\zeta_{1I}^2}ln\frac{|b_1|}{|a_1|}=0.$$
 The amplitudes associated with $|q|^2$ are given by
 $$
 A(q)=\frac{64\zeta_{1R}^2\zeta_{1I}^2}{2|\zeta_{1}^2|+\zeta_{1}^2+\zeta_{1}^{*2}}.
 $$
In Fig. (\ref{f2})(a), we give the 3-D graph of the single-soliton solution.

\subsection{ Two-soliton solutions}\

When $N = 2$, the solution (\ref{soliton}) can also be written out. The two-soliton solutions of higher-order KN equation has the
form of $q(x, t) = \Delta_1/\Delta_0$ with
$$
\begin{array}{l}
\Delta_1=\delta_1e^{-2\theta_{1R}+2i\theta_{1I}-4\theta_{2R}}+\delta_2e^{2\theta_{1R}+2i\theta_{1I}}+\delta_3e^{2\theta_{1R}+2i\theta_{1I}
-4\theta_{2R}}+\delta_4e^{-2\theta_{1R}+2i\theta_{1I}}
+\delta_5e^{-2\theta_{2R}+2i\theta_{2I}}+\delta_6e^{4i\theta_{1I}-2\theta_{2R}-2i\theta_{2I}}\\
~~~+\delta_7e^{4\theta_{1R}+2\theta_{2R}+2i\theta_{2I}}+\delta_8e^{4\theta_{1R}-2\theta_{2R}+2i\theta_{2I}}+\delta_9e^{2\theta_{2R}+2i\theta_{2I}}
+\delta_{10}e^{2\theta_{1R}-2i\theta_{1I}+4i\theta_{2I}}
+\delta_{11}e^{2\theta_{1R}+2i\theta_{1I}+4\theta_{2R}}+\delta_{12}e^{-2\theta_{1R}+2i\theta_{1I}+4\theta_{2R}}\\
~~~+\delta_{13}e^{4i\theta_{1I}+2\theta_{2R}-2i\theta_{2I}}+\delta_{14}e^{-4\theta_{1R}-2\theta_{2R}+2i\theta_{2I}}+\delta_{15}e^{-4\theta_{1R}
+2\theta_{2R}+2i\theta_{2I}}+\delta_{16}e^{-2\theta_{1R}-2i\theta_{1I}+4i\theta_{2I}},
\end{array}
$$
$$
\begin{array}{l}
\Delta_0=\rho_0+\rho_1e^{-4\theta_{1R}-4\theta_{2R}}+\rho_2e^{-4\theta_{1R}}+\rho_3e^{-4\theta_{2R}}+\rho_4e^{-2\theta_{1R}-2i\theta_{1I}-2\theta_{2R}+2i\theta_{2I}}
+\rho_5e^{-2\theta_{1R}+2i\theta_{1I}-2\theta_{2R}-2i\theta_{2I}}+\rho_6e^{4\theta_{1R}+4\theta_{2R}}\\
~~~+\rho_7e^{4\theta_{1R}}+\rho_8e^{4\theta_{2R}}+\rho_9e^{2\theta_{1R}-2i\theta_{1I}-2\theta_{2R}+2i\theta_{2I}}+\rho_{10}e^{2\theta_{1R}+2i\theta_{1I}+2\theta_{2R}-2i\theta_{2I}}
+\rho_{11}e^{4\theta_{1R}-4\theta_{2R}}+\rho_{12}e^{2\theta_{1R}-2i\theta_{1I}+2\theta_{2R}+2i\theta_{2I}}\\
~~~+\rho_{13}e^{2\theta_{1R}+2i\theta_{1I}-2\theta_{2R}-2i\theta_{2I}}+\rho_{14}e^{-4\theta_{1R}+4\theta_{2R}}+\rho_{15}e^{-2\theta_{1R}-2i\theta_{1I}+2\theta_{2R}+2i\theta_{2I}}+\rho_{16}e^{-2\theta_{1R}+2i\theta_{1I}+2\theta_{2R}-2i\theta_{2I}}
+\rho_{17}e^{-4i\theta_{1I}+4i\theta_{2I}}\\
~~~+\rho_{18}e^{4i\theta_{1I}-4i\theta_{2I}}.
\end{array}
$$

The coefficients of these exponential terms are constituted of $a_1, a_1^{*},a_2, a_2^{*},b_1, b_1^{*},b_2, b_2^{*}$ and $\zeta_{1},\zeta_{1}^{*},\zeta_{2},\zeta_{2}^{*}$.
However, it is tedious to write them all out, and they can be calculated directly via the computer. Instead of presenting the complex expression, we show the typical solution behavior in Fig. (\ref{f2})(b). It can be seen from Fig. (\ref{f2})(b) that when $t\rightarrow-\infty$, the solution consists of two single solitons that are far apart and travel opposite each other. When they collide together, the interaction weakens. When $t\rightarrow+\infty$, these are separated into two single solitons, and there is no change in shape and speed, and no energy radiation is emitted to the far field. Therefore, the interaction of these solitons is elastic. But it can be observed from the graph that after the interaction, each soliton has a phase shift and a position shift.
\begin{figure}[htpb]
\centering
\includegraphics[width=4.5cm,height=3.5cm]{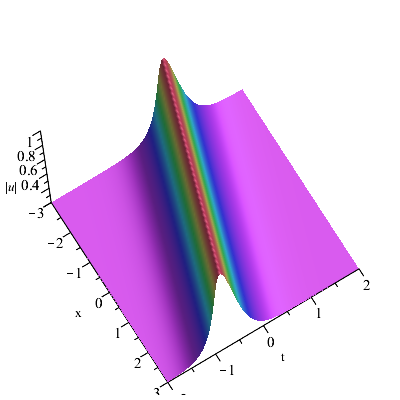}\qquad \qquad
\includegraphics[width=3.5cm,height=3.3cm]{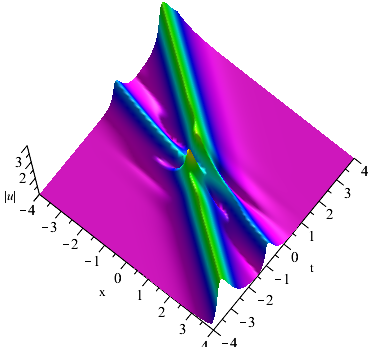}\\
$(a)$ \qquad\qquad \qquad \qquad \qquad$(b)$
\caption{(Color online) Soliton solutions for $|q|$.(a) one-soliton solution in three-dimensional plot.where $\zeta_1=1+0.25i,a_1=1,b_1=0.1+0.7i$,(b) two-soliton solutions in three-dimensional plot. where $\zeta_1=1+0.25i,a_1=1,b_1=0.1+0.7i,\zeta_2=1+0.5i,a_2=1,b_2=-0.1+0.7i.$}\label{f2}
\end{figure}

Next, we verify the rationality of the above analysis through the expression of the soliton solution. In general, making the assumption $\xi_i\eta_i>0$ and $v_{1}<v_{2}$. This means that at $t\rightarrow-\infty$, soliton-2 is on the left side of soliton-1 and moves faster, and the two solitons are in the moving frame with velocity $v_{i}=8 \zeta_{iR}^2\zeta_{iI}^2-6 (\zeta_{iR}^2-\zeta_{iI}^2)^2 $, note that $\theta_{1R}=2 m_{1}(x-v_{1} t),\theta_{2R}=2 m_{2}(x-v_{2}t)$, it yields
$$
m_2\theta_{1R}-m_1\theta_{2R}=2m_1m_2(v_{2}-v_{1})t,
$$
we used the asymptotic analysis technique \cite{XT-2010-JPA}, we intend to investigate the collision dynamics of these two-soliton solutions. Then we have asymptotic expressions of $q(x,t)$ under different asymptotic states of $\theta_{1R}$ and $\theta_{2R}$.

(i) Before collision (as $t\rightarrow-\infty$),

(a) If $|\theta_{1R}|<\infty$, then $ \theta_{2R}\rightarrow\infty$:
\begin{equation}
q(x, t) \sim  8\tilde{a}_1^{M}\tilde{b}_1^{*M}\zeta_{1R}\zeta_{1I} \frac{\zeta_{1}|\tilde{b}_1^{M}|^2e^{- 2\theta_{1R}}+\zeta_{1}^{*}|\tilde{a}_1^{M}|^2e^{ 2\theta_{1R}}}{(\zeta_{1}^{*}|\tilde{b}_1^{M}|^2e^{- 2\theta_{1R}}+\zeta_{1}|\tilde{a}_1^{M}|^2e^{ 2\theta_{1R}})^2}e^{2i\theta_{1I}},
\end{equation}
where $\tilde{a}_1^{M}=a_1(\zeta_{2}^2-\zeta_{1}^2), \tilde{b}_1^{M}=b_1(\zeta_{2}^{*2}-\zeta_{1}^2)$.

(b) If $|\theta_{2R}|<\infty$, then $\theta_{1R} \rightarrow-\infty$:
\begin{equation}
q(x, t) \sim  8\tilde{a}_2^{M}\tilde{b}_2^{*M}\zeta_{2R}\zeta_{2I} \frac{\zeta_{2}|\tilde{b}_2^{M}|^2e^{- 2\theta_{2R}}+\zeta_{2}^{*}|\tilde{a}_2^{M}|^2e^{ 2\theta_{2R}}}{(\zeta_{2}^{*}|\tilde{b}_2^{M}|^2e^{- 2\theta_{2R}}+\zeta_{2}|\tilde{a}_2^{M}|^2e^{ 2\theta_{2R}})^2}e^{2i\theta_{2I}},
\end{equation}
where $\tilde{a}_2^{M}=a_2(\zeta_{2}^{2}-\zeta_{1}^{*2}), \tilde{b}_2^{M}=b_2(\zeta_{2}^2-\zeta_{1}^2)$.

(ii) After collision (as $t\rightarrow\infty$),

(a) If $|\theta_{1R}|<\infty$, then $\theta_{2R} \rightarrow-\infty$:
\begin{equation}
q(x, t) \sim  8\tilde{a}_1^{P}\tilde{b}_1^{*P}\zeta_{1R}\zeta_{1I} \frac{\zeta_{1}|\tilde{b}_1^{P}|^2e^{- 2\theta_{1R}}+\zeta_{1}^{*}|\tilde{a}_1^{P}|^2e^{ 2\theta_{1R}}}{(\zeta_{1}^{*}|\tilde{b}_1^{P}|^2e^{- 2\theta_{1R}}+\zeta_{1}|\tilde{a}_1^{P}|^2e^{ 2\theta_{1R}})^2}e^{2i\theta_{1I}},
\end{equation}
where $\tilde{a}_1^{P}=a_1(\zeta_{2}^{*2}-\zeta_{1}^2), \tilde{b}_1^{P}=b_1(\zeta_{2}^{2}-\zeta_{1}^2)$.

(b) If $|\theta_{2R}|<\infty$, then $\theta_{1R} \rightarrow\infty$:
\begin{equation}
q(x, t) \sim  8\tilde{a}_2^{P}\tilde{b}_2^{*P}\zeta_{2R}\zeta_{2I} \frac{\zeta_{2}|\tilde{b}_2^{P}|^2e^{- 2\theta_{2R}}+\zeta_{2}^{*}|\tilde{a}_2^{P}|^2e^{ 2\theta_{2R}}}{(\zeta_{2}^{*}|\tilde{b}_2^{P}|^2e^{- 2\theta_{2R}}+\zeta_{1}|\tilde{a}_2^{P}|^2e^{ 2\theta_{2R}})^2}e^{2i\theta_{2I}},
\end{equation}
where $\tilde{a}_2^{P}=a_2(\zeta_{2}^{2}-\zeta_{1}^2), \tilde{b}_2^{P}=b_2(\zeta_{2}^{2}-\zeta_{1}^{*2})$.

It is pointed out that the asymptotic solutions can also be written as the function of solitary waves, and the respective velocities are $v_1$ and $v_2$, which remain unchanged before and after the collision. This elastic interaction is a remarkable property, which indicates that DNLS Eq.(\ref{hDNLS1}) is integrable. From the above asymptotic solution, we can get the phase difference of soliton-1 solution,

$$
\Delta\theta_{01}=\frac{1}{2}\left(\ln\frac{ |\tilde{b}_{1}^{P}|}{|\tilde{a}_{1}^{P}|}-\ln \frac{|\tilde{b}_{1}^{M}|}{|\tilde{a}_{1}^{M}|}\right)= \ln \left|\frac{\zeta_{2}^2-\zeta_{1}^{2}}{\zeta_{2}^{*2}-\zeta_{1}^{2}}\right|.
$$
Following similar calculations, we can get the phase difference of soliton-2 solution,
$$
\Delta \theta_{02}=\frac{1}{2}\left(\ln\frac{|\tilde{b}_{2}^{P}|}{|\tilde{a}_{2}^{P}|}-\ln \frac{|\tilde{b}_{2}^{M}|}{|\tilde{a}_{2}^{M}|}\right)= \ln \left|\frac{\zeta_{2}^2-\zeta_{1}^{*2}}{\zeta_{2}^{2}-\zeta_{1}^{2}}\right|=-\Delta\theta_{01}.
$$

 \section{Soliton matrix for multiple zeros}
In this section, we will further consider the case of multiple zeros, where the multiplicity of $\{\pm\zeta_{i},\pm\zeta^{*}_{i}\}$ is greater than 1, then the determinant of $\mathbf{P}$ can be written in the following form:
$$
\operatorname{det}\mathbf{P}_{+}(\zeta)=\left(\zeta^{2}-\zeta_{1}^{2}\right)^{n_{1}}\left(\zeta^{2}-\zeta_{2}^{2}\right)^{n_{2}} \cdots\left(\zeta^{2}-\zeta_{r}^{2}\right)^{n_{r}} \rho(\zeta),
$$
$$
\operatorname{det}\mathbf{P}_{-}^{-1}(\zeta)=\left(\zeta^{2}-\zeta_{1}^{*2}\right)^{n_{1}}\left(\zeta^{2}-\zeta_{2}^{*2}\right)^{n_{2}} \cdots\left(\zeta^{2}-\zeta_{r}^{*2}\right)^{n_{r}} \hat{\rho}(\zeta),
$$
 where $\rho(\zeta_i)\neq 0$ $(i=1..r)$ for all $\zeta\in \mathbb{C}_{13}$, and $\hat{\rho}(\zeta_i)\neq 0$ $(i=1..r)$ for all $\zeta\in \mathbb{C}_{24}$.

Compared with the case of simple zeros, the number of kernel functions with multiple zeros is related to the multiplicity of zeros. For example, for discrete spectral point $\{\zeta_1,\zeta_1^{*}\}$, its kernel function is
\begin{equation}\mathbf{P}\left(\zeta_{1}\right) \left|v_{j}\right\rangle=0, \quad \left\langle v_{j}\right| \mathbf{P}\left(\zeta_{1}^{*}\right)=0, \quad\zeta_{1}\in \mathbb{C}_{13}, \quad 1 \leq j \leq n_1,\end{equation}
$\left|v_{j}\right\rangle$ is linearly independent. For the case of multiple zeros, the corresponding $\Gamma$ and $\Gamma^{-1}$
can be given by using the following theorem,

\begin{lemma}(\cite{VSS-2003-SAM},Lemma 3)
 Consider a pair of higher order zeros of order $n_j$ $(j=1,..,r)$: $\{\zeta_j,-\zeta_j\}$ in $\mathbb{C}_{13}$ and $\{\zeta_j^{*},-\zeta_j^{*}\}$ in $\mathbb{C}_{24}$. Then the corresponding soliton matrix $\Gamma_j(\zeta)$  and its inverse can be cast in the following form
 \begin{equation}
 \begin{array}{l}
\Gamma^{-1}_j(\zeta)=I+\left(\left|\phi_{j,1}\right\rangle, \cdots,\left|\tilde{{\phi}}_{j,n_j}\right\rangle\right) \Xi_j(\zeta)\left(\begin{array}{c}
\left\langle \tilde{\varphi}_{j,n_j}\right| \\
\vdots \\
\left\langle{\varphi}_{j,1}\right|
\end{array}\right), \\
\Gamma_j(\zeta)=I+\left(\left|\bar{\varphi}_{j,n_j}\right\rangle, \cdots,\left|\bar{\tilde{\varphi}}_{j,1}\right\rangle\right) \overline{\Xi_j}(\zeta)\left(\begin{array}{c}
\left\langle\bar{\phi}_{j,1}\right| \\
\vdots \\
\left\langle{\bar{{\tilde{\phi}}}_{j,n_j}}\right|
\end{array}\right),
\end{array}\label{5.16}
 \end{equation}
 where the matrices $\Xi_j(\zeta)$ and $\overline{\Xi}_j(\zeta)$ are defined as
 $$
\Xi_j(\zeta)=\left(\begin{array}{cc}
\mathcal{D}^{+}\left(\zeta-\zeta_{j}\right) & \mathbf{0}_{n \times n} \\
\mathbf{0}_{n \times n} & \mathcal{D}^{+}\left(\zeta+\zeta_{j}\right)
\end{array}\right), \quad \overline{\Xi_j}(\zeta)=\left(\begin{array}{cc}
\mathcal{D}^{-}\left(\zeta-\zeta_{j}^{*}\right) & \mathbf{0}_{n \times n} \\
\mathbf{0}_{n \times n} & \mathcal{D}^{-}\left(\zeta+\zeta_{j}^{*}\right)
\end{array}\right),
$$
$\mathcal{D}^{+}(\gamma),\mathcal{D}^{-}(\gamma)$ are upper-triangular and lower-triangular Toeplitz matrices defined as:
$$
\mathcal{D}^{+}(\gamma)=\left(\begin{array}{cccc}
\gamma^{-1} & \gamma^{-2} & \cdots & \gamma^{-n} \\
0 & \ddots & \ddots & \vdots \\
\vdots & \ddots & \gamma^{-1} & \gamma^{-2} \\
0 & \cdots & 0 & \gamma^{-1}
\end{array}\right),~~~~~~\mathcal{D}^{-}(\gamma)=\left(\begin{array}{cccc}
\gamma^{-1} & 0 & \cdots & 0 \\
\gamma^{-2} & \gamma^{-1} & \ddots & \vdots \\
\vdots & \ddots & \ddots & 0 \\
\gamma^{-n} & \cdots & \gamma^{-2} & \gamma^{-1}
\end{array}\right),
$$
and vectors $|\phi_{j,i}\rangle,|\tilde{\phi}_{j,i}\rangle,\langle\bar{\tilde{\phi}}_{j,i}|,\langle\varphi_{j,i}|,
\langle\tilde{\varphi}_{j,i}|,|\bar{\tilde{\varphi}}_{j,i}\rangle(i=1,..,n_N)$ are independent of $\zeta$.
\end{lemma}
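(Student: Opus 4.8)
The plan is to build the soliton matrix $\Gamma_j(\zeta)$ for a single higher-order zero of multiplicity $n_j$ by imposing the structural constraints one at a time, just as in the simple-zero case, and then to recognize that the resulting linear-algebraic relations are exactly encoded by Toeplitz matrices acting on fixed vectors. First I would set up the ansatz: since $\Gamma_j$ must be rational in $\zeta$ with poles only at $\zeta_j^{*}$ and $-\zeta_j^{*}$ (from the symmetry $\mathbf{P}(-\zeta)=\sigma_3\mathbf{P}(\zeta)\sigma_3$, paired poles must occur), and since $\det\Gamma_j(\zeta)=\bigl((\zeta^2-\zeta_j^2)/(\zeta^2-\zeta_j^{*2})\bigr)^{n_j}$ must hold (the multiple-zero analogue of \eqref{L6}), the pole at $\zeta_j^{*}$ has order $n_j$. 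Thus I would write $\Gamma_j^{-1}(\zeta)=I+\sum_{m=1}^{n_j}\bigl[A_{j,m}(\zeta-\zeta_j)^{-m}+(-1)^{m}\sigma_3 A_{j,m}\sigma_3(\zeta+\zeta_j)^{-m}\bigr]$ with each $A_{j,m}$ of rank one, and similarly for $\Gamma_j$; collecting the $(\zeta-\zeta_j)^{-m}$ pieces of the two nontrivial blocks into a single matrix product displays precisely the upper-triangular Toeplitz matrix $\mathcal{D}^{+}(\zeta-\zeta_j)$, and the $(\zeta+\zeta_j)^{-m}$ pieces give $\mathcal{D}^{+}(\zeta+\zeta_j)$, assembling into $\Xi_j(\zeta)$. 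This is essentially a repackaging, so the content is in pinning down the vectors.

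Next I would determine the vectors from the zero conditions. For a simple zero the kernel relation $\mathbf{P}(\zeta_j)|v_j\rangle=0$ together with $\mathbf{P}\Gamma_j^{-1}$ being regular at $\zeta_j$ forces the residue condition; for a zero of order $n_j$ the analogue is that $\mathbf{P}(\zeta)\Gamma_j^{-1}(\zeta)$ stays analytic at $\zeta=\zeta_j$, which after a Taylor expansion of $\mathbf{P}$ around $\zeta_j$ produces a triangular system relating the Laurent coefficients $A_{j,m}$ to the derivatives $\mathbf{P}^{(k)}(\zeta_j)$ acting on a chain of vectors $|v_{j,1}\rangle,\dots,|v_{j,n_j}\rangle$ — a Jordan-type chain for the degenerate kernel. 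Solving that triangular system expresses the numerator vectors $|\phi_{j,i}\rangle$, $|\tilde\phi_{j,i}\rangle$ in terms of the $|v_{j,i}\rangle$ and the Toeplitz structure (which is why $\mathcal{D}^{+}$ appears: inverting the Taylor-coefficient nilpotent shift is a Toeplitz inversion). The row-vector relations $\langle v_{j,i}|\mathbf{P}(\zeta_j^{*})=0$ and the symmetry $(\mathbf{P}(\zeta^{*}))^{\dagger}=\mathbf{P}^{-1}(\zeta)$ then give the barred vectors and $\overline{\Xi}_j$ with lower-triangular Toeplitz $\mathcal{D}^{-}$, so $\Gamma_j$ comes out in the claimed form; that the vectors are $\zeta$-independent is automatic because they are built only from the fixed data $\mathbf{P}^{(k)}(\zeta_j)$ and $|v_{j,i}\rangle$.

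The main obstacle I anticipate is the bookkeeping of the triangular (Jordan-chain) system: showing that the residue/analyticity conditions at a zero of order $n_j$ are solvable and that the solution has rank-one blocks organized exactly by the Toeplitz matrices $\mathcal{D}^{\pm}$, rather than some more general Hankel/Toeplitz hybrid. Concretely, one must verify that the matrix whose entries are $\partial^{k}_\zeta$ of the relevant analytic factor, evaluated at $\zeta_j$, is nonsingular on the chain subspace — this is where $\rho(\zeta_j)\neq0$ (the hypothesis that the listed zeros are all of $\det\mathbf{P}$) is used — and then that back-substitution in the triangular system is literally multiplication by the inverse of a triangular Toeplitz matrix, which is again triangular Toeplitz. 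Since this lemma is quoted verbatim from \cite{VSS-2003-SAM}, in the paper I would not reproduce the full computation but instead indicate how the single-zero construction \eqref{L7} generalizes and refer the reader there for the triangular-system details; the symmetries \eqref{2.29} and \eqref{2.32} are exactly the two ingredients (beyond the simple-zero argument) needed to pass from the general statement in \cite{VSS-2003-SAM} to the paired-zero, $\sigma_3$-conjugated form displayed in \eqref{5.16}.
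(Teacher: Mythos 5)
The paper offers no proof of this statement at all: the lemma is imported verbatim from \cite{VSS-2003-SAM} (Lemma 3 there), and immediately after quoting it the authors remark that determining the vector parameters from the order-by-order residue relations in $\Gamma(\zeta)\Gamma^{-1}(\zeta)=I$ is too cumbersome, so they sidestep the lemma's machinery and instead build $\Gamma$ by the unipolar-limit construction of the theorem that follows (products of simple-zero factors $\Gamma_i^{[j]}$ with merging spectral points). Your closing decision --- indicate how the single-zero factor (\ref{L7}) generalizes and refer the reader to \cite{VSS-2003-SAM} --- therefore matches what the paper actually does, and your skeleton (paired poles forced by the symmetry (\ref{2.32}), pole order fixed by the determinant analogue of (\ref{L6}), Jordan chains from analyticity of $\mathbf{P}\Gamma_j^{-1}$ at $\pm\zeta_j$, the Hermitian symmetry (\ref{2.29}) producing the barred data) is the right outline of the source's argument, whereas the paper's own working route is the limit technique rather than this direct analysis.

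One step in your sketch is wrong as stated, and it sits exactly where the lemma's content lies: you posit a Laurent ansatz ``with each $A_{j,m}$ of rank one'' and call the passage to the Toeplitz form ``essentially a repackaging.'' It is not. In the representation (\ref{5.16}) the coefficient of $(\zeta-\zeta_j)^{-m}$ equals the column block $\left(|\phi_{j,1}\rangle,\ldots\right)$ times the $(m-1)$-st power of the nilpotent shift times the row block, so only the top-order coefficient $m=n_j$ is rank one, while the lower-order coefficients are generically of rank two; conversely, prescribing an independent rank-one matrix at every pole order produces coefficients that generically cannot be factored through a single column block, a single row block and $\mathcal{D}^{\pm}$ in between. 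So the rank-one ansatz would actually miss the true dressing factor, and the equivalence you treat as bookkeeping (nested Jordan-chain/Toeplitz structure versus arbitrary pole coefficients) is precisely the nontrivial assertion of the lemma --- something you partly concede in your ``main obstacle'' paragraph but never close. As a standalone proof the proposal therefore has a genuine gap at that step; as a plan for this paper it is acceptable only because, like the authors, you ultimately lean on the citation (or on the limit construction) rather than on the flawed ansatz.
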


So there are
\begin{equation}
\Gamma(\zeta)=\Gamma_{r}(\zeta) \Gamma_{r-1}(\zeta) \cdots \Gamma_{1}(\zeta),
\end{equation}
\begin{equation}
\Gamma^{-1}(\zeta)=\Gamma_{1}^{-1}(\zeta)\Gamma_{2}^{-1}(\zeta) \cdots \Gamma_{r}^{-1}(\zeta).
\end{equation}
In fact, the rest of the vector parameters in (\ref{5.16}) can be derived by calculating the  residue of each order in the identity $\Gamma(\zeta) \Gamma^{-1}(\zeta)=I$ at $\zeta=\zeta_{j}$ and $\zeta=-\zeta_{j}$,
$$
\Gamma\left(\zeta_{j}\right)\left(\begin{array}{c}
\left|\phi_{j,1}\right\rangle \\
\vdots \\
\left|\phi_{j,n_r}\right\rangle
\end{array}\right)=0, \quad \Gamma\left(-\zeta_{j}\right)\left(\begin{array}{c}
\left|\tilde{\phi}_{j,1}\right\rangle \\
\vdots \\
\left|\tilde{\phi}_{j,n_r}\right\rangle
\end{array}\right)=0,
$$
where
$$
\Gamma(\zeta)=\left(\begin{array}{cccc}
\Gamma(\zeta) & 0 & \cdots & 0 \\
\frac{d}{d \zeta} \Gamma(\zeta) & \Gamma(\zeta) & \ddots & \vdots \\
\vdots & \ddots & \ddots & 0 \\
\frac{1}{(n_r-1) !} \frac{d^{n_r-1}}{d \zeta^{n_r-1}} \Gamma(\zeta) & \cdots & \frac{d}{d\zeta} \Gamma(\zeta) & \Gamma(\zeta)
\end{array}\right).
$$
Using this method, the process of solving soliton solution is very complex. Next, the corresponding $\Gamma$ can be constructed by using the method of \cite{GBL-2012-SAM}, the dressing matrix of multiple zeros is derived by unipolar limit method. The results are given by the following theorem:
\begin{theorem}
Suppose $\zeta=\zeta_i$ is the zero of geometric multiplicity $n_j(j=1,..,r)$, and $\sum_{j=1}^{r}n_j=N$, then the modified matrix can be expressed as
$$
\Gamma(\zeta)=\Gamma_{r}^{[n_r-1]} \cdots \Gamma_{r}^{[0]}\cdots\Gamma_{1}^{[n_1-1]} \cdots \Gamma_{1}^{[0]}, \quad \Gamma^{-1}(\zeta)=(\Gamma_{1}^{[0]})^{-1} \cdots (\Gamma_{1}^{[n_1-1]})^{-1}\cdots(\Gamma_{r}^{[0]})^{-1} \cdots (\Gamma_{r}^{[n_r-1]})^{-1},
$$
where
$$
\Gamma_{i}^{[j]}(\zeta)=I+\frac{A_{i}^{[j]}}{\zeta-\zeta_i^{*}}-\frac{\sigma_{3} A_{i}^{[j]} \sigma_{3}}{\zeta+\zeta_i^{*}}, \quad (\Gamma_{i}^{[j]})^{-1}(\zeta)=I+\frac{A_{i}^{\dagger[j]}}{\zeta-\zeta_{i}}-\frac{\sigma_{3} A_{i}^{\dagger[j]} \sigma_{3}}{\zeta+\zeta_{i}},
$$
$$
A_{i}^{[j]}=\frac{\zeta_{i}^{*2}-\zeta_{i}^{2}}{2}\left(\begin{array}{cc}
\alpha_{i}^{*[j]} & 0 \\
0 & \alpha_{i}^{[j]}
\end{array}\right)|v_{i}^{[j]}\rangle\langle v_{i}^{[j]}|,~~~(\alpha_{i}^{[j]})^{-1}=\langle v_{i}^{[j]}|\left(\begin{array}{cc}
\zeta_i & 0 \\
0 & \zeta_i^{*}
\end{array}\right)|v_{i}^{[j]}\rangle,
$$
$$
|v_{i}^{[j]}\rangle=\lim_{\delta \rightarrow 0} \frac{(\Gamma_{i}^{[n_j-1]} \cdots \Gamma_{i}^{[0]}\cdots\Gamma_{1}^{[n_1-1]} \cdots \Gamma_{1}^{[0]})|_{\zeta=\zeta_{i}+\delta}}{\delta^{j}}|v_{i}\rangle(\zeta_{i}+\delta),
$$
$$
\langle v_{i}^{[j]}|=\lim_{\delta \rightarrow 0}\langle v_{1}|(\zeta_{i}^{*}+\delta) \frac{(\Gamma_{1}^{[0]-1} \cdots \Gamma_{1}^{[n_1-1]-1} \cdots\Gamma_{i}^{[0]-1} \cdots \Gamma_{i}^{[n_j-1]-1})|_{\zeta=\zeta_{i}^{*}+\delta}}{\delta^{j}}.
$$
\end{theorem}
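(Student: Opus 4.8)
The plan is to derive the multiple‑zero dressing matrix as a confluent (coalescence) limit of the simple‑zero construction of Section~3 — the \emph{unipolar‑limit} technique of~\cite{GBL-2012-SAM}. For each pair of zeros $\{\pm\zeta_i\}$ of order $n_i$ I would perturb it into $n_i$ pairs of distinct simple zeros clustered near $\zeta_i$, say $\{\pm(\zeta_i+\delta_{i,m})\}$ with $\delta_{i,m}\to0$, and assign to the simple zero at $\zeta_i+\delta_{i,m}$ the kernel vector $|v_i\rangle(\zeta_i+\delta_{i,m})$, where $|v_i\rangle(\zeta)$ is a single $\mathbb{C}^2$‑valued function analytic near $\zeta_i$ whose Taylor jet at $\zeta_i$ realises the kernel jet of the non‑regular RHP carrying the order‑$n_i$ zero. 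For this purely simple‑zero configuration the dressing matrix is already given by \eqref{L7}--\eqref{L9}: it is an ordered product of the elementary factors of \eqref{L7}, the factor attached to $\zeta_i+\delta_{i,m}$ being built on the vector obtained by applying to $|v_i\rangle(\zeta_i+\delta_{i,m})$ the product of all elementary factors standing to its right in the ordering (as in \eqref{L10}). The theorem asserts that, as the clusters collapse, this product converges, factor by factor, to the claimed expression with the claimed limiting vectors $|v_i^{[j]}\rangle$.

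The core of the argument is to show, by induction on the position $j$ of a factor inside its cluster, that the vector defining the $j$‑th factor of cluster $i$ is $O(\delta^{\,j})$ at the coalescence. At the coalescence the argument fed into that factor collapses onto the single kernel direction $|v_i\rangle(\zeta_i)$, which — by the prescribed jet of $|v_i\rangle(\zeta)$ together with the induction hypothesis — lies in the one‑dimensional kernel at $\zeta_i$ of each of the $j$ earlier factors of cluster $i$; evaluating each of those factors at the nearby point therefore costs one power of $\delta$, while the factors belonging to earlier clusters contribute invertible, $O(1)$ matrices, yielding the $O(\delta^{\,j})$ estimate. Because the elementary factor \eqref{L7} is invariant under rescaling of its defining vector, one may divide that vector by $\delta^{\,j}$ without changing the factor; the rescaled vector then has a finite limit, computed by Taylor expansion in $\delta$, and bookkeeping the surviving derivatives identifies it with $|v_i^{[j]}\rangle$, the row version following from the symmetry \eqref{2.29}. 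Passing to the limit factor by factor gives $\Gamma$ in the asserted product form; continuity of the determinant together with \eqref{L6} gives $\det\Gamma(\zeta)=\prod_i\big((\zeta^2-\zeta_i^2)/(\zeta^2-\zeta_i^{*2})\big)^{n_i}$, so $\Gamma$ cancels every zero with the correct multiplicity, and the symmetries \eqref{2.29} and \eqref{2.32} survive the limit, giving $\Gamma^{-1}$ in the stated form and $\Gamma\Gamma^{-1}=I$.

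The step I expect to be the main obstacle is controlling the non‑degeneracy of the limit: one must guarantee that the analytic jet‑realising function $|v_i\rangle(\zeta)$ can be chosen compatibly with \emph{all} the successive residue/derivative identities extracted from $\Gamma_\delta\Gamma_\delta^{-1}=I$, so that the renormalised limits $|v_i^{[j]}\rangle$ remain well defined (in particular the scalars $\alpha_i^{[j]}$ stay finite and nonzero) and no spurious vanishing of $\det\Gamma_\delta$ appears or persists. This amounts to matching the Taylor data of $|v_i\rangle(\zeta)$ with the upper/lower‑triangular Toeplitz structure of the kernel jet of the order‑$n_i$ zero described in the preceding lemma; once this compatibility is in hand, the residue identities for the limiting $\Gamma$ at the multiple zeros follow by expanding $\Gamma(\zeta)\Gamma^{-1}(\zeta)$ about $\zeta=\zeta_i$ and collecting powers of $(\zeta-\zeta_i)$, which is routine. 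A subsidiary point to verify is that the cluster ordering and the within‑cluster ordering of the elementary factors affect the final $\Gamma$ only through a relabelling of the vectors $|v_i^{[j]}\rangle$, so that the product in the statement is unambiguous.
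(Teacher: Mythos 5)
The paper offers no written proof of this theorem: it simply states the result as the outcome of the ``unipolar limit'' technique of \cite{GBL-2012-SAM} applied to the simple-zero dressing factors \eqref{L7}--\eqref{L10}, with the limits defining $|v_i^{[j]}\rangle$ built into the statement itself. Your coalescence/confluence argument --- splitting each order-$n_i$ zero into nearby simple zeros, using the rescaling invariance of the elementary factor, establishing the $O(\delta^{j})$ behaviour inductively, and passing to the limit while checking $\det\Gamma$ and the symmetries \eqref{2.29}, \eqref{2.32} --- is exactly that approach, so your proposal is consistent with (and more detailed than) what the paper provides.
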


Then we can get
$$
P^{(1)}=\left(I-\sum_{i=1}^{r}\sum_{j=0}^{n_i-1}\left[\frac{B_{i}^{[j]}-\sigma_{3} B_{i}^{[j]} \sigma_{3}}{\zeta_{i}^{*}}\right]\right)^{-1}\sum_{i=1}^{r}\sum_{j=0}^{n_i-1} \frac{\sigma_{3} B_{i}^{[j]} \sigma_{3}-B_{i}^{[j]}}{\zeta_{i}^{*2}},
$$
which lead to
\begin{equation}
q(x,t)=\left(\frac{2\frac{det\tilde{F}}{det\tilde{M}}}{1+2\frac{det\tilde{G}}{det\tilde{M}}}\right)_{x}
=\left(\frac{2det\tilde{F}}{det\tilde{M}+2det\tilde{G}}\right)_x,\label{qsoliton2}
\end{equation}
where
$$
\tilde{F}=\left(\begin{array}{ccccc}
\tilde{M}^{[11]} & \tilde{M}^{[12]} & \cdots & \tilde{M}^{[1 r]} & \tilde{\chi}_1 \\
\tilde{M}^{[21]} & \tilde{M}^{[22]} & \cdots & \tilde{M}^{[2 r]} & \tilde{\chi}_2 \\
\vdots & \vdots & \ddots & \vdots & \vdots \\
\tilde{M}^{[r 1]} & \tilde{M}^{[r 2]} & \cdots & \tilde{M}^{[r r]} & \tilde{\chi}_r \\
\tilde{\psi}_1 & \tilde{\psi}_2 & \cdots & \tilde{\psi}_r & 0
\end{array}\right),~~~~
\tilde{M}=\left(\begin{array}{ccccc}
\tilde{M}^{[11]} & \tilde{M}^{[12]} & \cdots & \tilde{M}^{[1 r]}  \\
\tilde{M}^{[21]} & \tilde{M}^{[22]} & \cdots & \tilde{M}^{[2 r]}  \\
\vdots & \vdots & \ddots & \vdots  \\
\tilde{M}^{[r 1]} & \tilde{M}^{[r 2]} & \cdots & \tilde{M}^{[r r]}
\end{array}\right),
$$
$$
\tilde{G}=\left(\begin{array}{ccccc}
\tilde{M}^{[11]} & \tilde{M}^{[12]} & \cdots & \tilde{M}^{[1 r]} & \tilde{\chi_1} \\
\tilde{M}^{[21]} & \tilde{M}^{[22]} & \cdots & \tilde{M}^{[2 r]} & \tilde{\chi_2} \\
\vdots & \vdots & \ddots & \vdots & \vdots \\
\tilde{M}^{[r 1]} & \tilde{M}^{[r 2]} & \cdots & \tilde{M}^{[r r]} & \tilde{\chi_r} \\
\tilde{\tau}_1 & \tilde{\tau}_2 & \cdots & \tilde{\tau}_r & 0
\end{array}\right),
$$
with
$$
\tilde{M}_{kl}^{[ij]}=\frac{1}{(k-1) !(l-1) !} \frac{\partial^{k+l-2}}{\partial \zeta^{*k-1} \partial \zeta^{l-1}}\frac{\left\langle v_{j}\left|\sigma_{3}\right| v_{i}\right\rangle}{\zeta+\zeta^{*}}-\frac{\langle v_{j}\mid v_{i}\rangle}{\zeta-\zeta^{*}}|_{\zeta=\zeta_i,\zeta^{*}=\zeta_j^{*}}.
$$

$$
\tilde{\chi}_i=\left(|v_{i}\rangle_1^{[0]}, |v_{i}\rangle_1^{[1]}, \quad \cdots, \quad |v_{i}\rangle_1^{[n_i-1]} \right)^{T},~~~~~~~~~~~~|v_{i}\rangle^{[j]}=\frac{1}{(j)!}\frac{\partial^{j}}{\partial(\zeta)^{j}}|v_{i}\rangle|_{\zeta=\zeta_{i}},
$$
$$
\tilde{\psi}_i=\left((\frac{\langle v_{i}|_2}{{\zeta_i}^{*2}})^{[0]}, (\frac{\langle v_{i}|_2}{{\zeta_i}^{*2}})^{[1]}, \quad \cdots, \quad (\frac{\langle v_{i}|_2}{{\zeta_1}^{*2}})^{[n_i-1]} \right),~~~~~~(\frac{\langle v_{i}|}{{\zeta_i}^{*2}})^{[j]}=(\frac{1}{(j)!}\frac{\partial^{j}}{\partial(\zeta)^{j}}(\frac{\langle v_{i}|}{{\zeta}^{*2}})|_{\zeta=\zeta_{i}^{*}}),
$$
$$
\tilde{\tau}_i=\left((\frac{\langle v_{i}|_1}{{\zeta_i}^{*}})^{[0]}, (\frac{\langle v_{i}|_1}{{\zeta_i}^{*}})^{[1]}, \quad \cdots, \quad (\frac{\langle v_{i}|_1}{{\zeta_i}^{*}})^{[n_i-1]} \right),~~~~~~(\frac{\langle v_{i}|}{{\zeta_i}^{*}})^{[j]}=(\frac{1}{(j)!}\frac{\partial^{j}}{\partial(\zeta)^{j}}(\frac{\langle v_{i}|}{{\zeta}^{*}})|_{\zeta=\zeta_{i}^{*}}).
$$

Hence, formula (\ref{qsoliton2}) gives the general expression of high-order solitons with multiple zeros. Because the spectral parameters here cannot be pure real or pure virtual, the expression of high-order soliton is relatively complex, but different $n_j$ and appropriate parameters can be selected, and the graphics of mixed high-order solitons solution can be given by using mathematical software such as Maple and Mathematica.
Here, we give several representative mixed solutions. In Fig. (\ref{f3}), let $n_1 = 2,n_j=0 (j=2..r)$, in (\ref{qsoliton2}) which represent the simple double-zero case, and $n_1 = 3,n_j=0(j=2..r)$, in (\ref{qsoliton2}) is the simple triple-zero case. In Fig. (\ref{f4}), take $n_1 = 2,n_2=1,n_j=0 (j=3..r)$, that is, a mixed solution of a double zero and a single zero, and take $n_1 = 2,n_2=2,n_j=0(j=3..r)$, which means a mixed solution of two double zeros.

\begin{figure}[htpb]
\centering
\includegraphics[width=4.5cm,height=3.5cm]{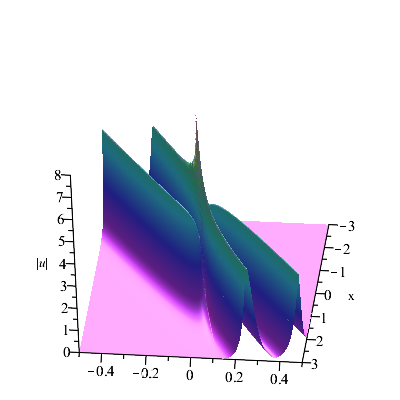}
\includegraphics[width=3.5cm,height=3.3cm]{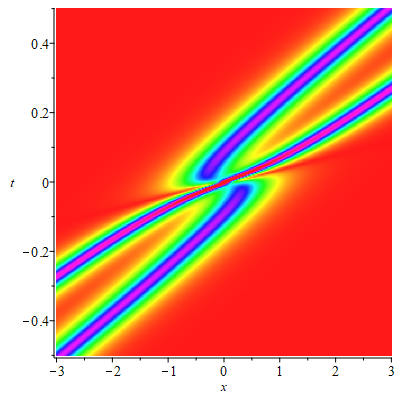}
\includegraphics[width=4.5cm,height=3.5cm]{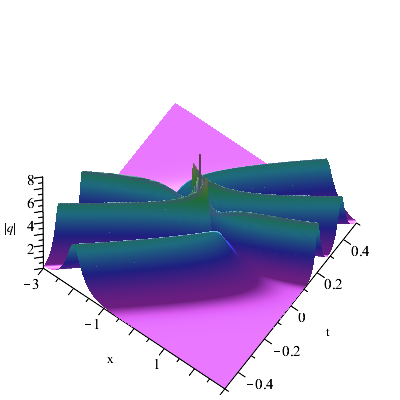}
\includegraphics[width=3.5cm,height=3.0cm]{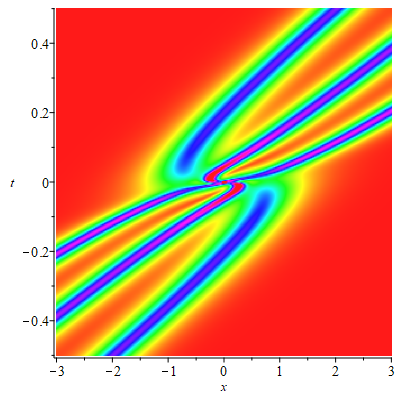}\\
~~~~$(a)$\qquad \qquad \qquad \qquad\qquad \qquad\qquad$(b)$\qquad \qquad\qquad \qquad \qquad\qquad $(c)$\qquad \qquad\qquad \qquad \qquad \qquad$(d)$
\caption{(Color online)(a) The double zeros soliton solutions for $|q|$ and $n_1=2,\zeta=1+i,a_1=b_1=1$.(b) Density plot of double zeros.(c)The triple zeros soliton solutions for $|q|$ and $n_1=2,\zeta=1+i,a_1=b_1=1$. (d) Density plot of triple zeros.}\label{f3}
\end{figure}

\begin{figure}[htpb]
\centering
\includegraphics[width=4.5cm,height=3.5cm]{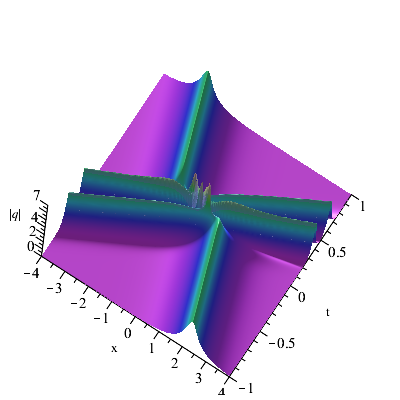}
\includegraphics[width=3.5cm,height=3.3cm]{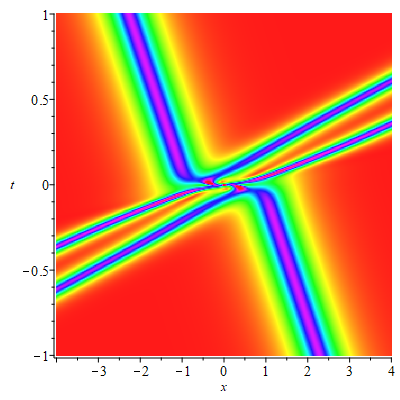}
\includegraphics[width=4.5cm,height=3.5cm]{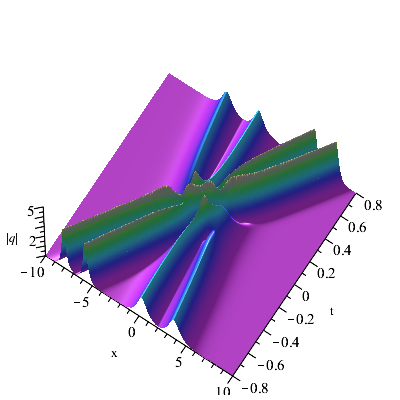}
\includegraphics[width=3.5cm,height=3.0cm]{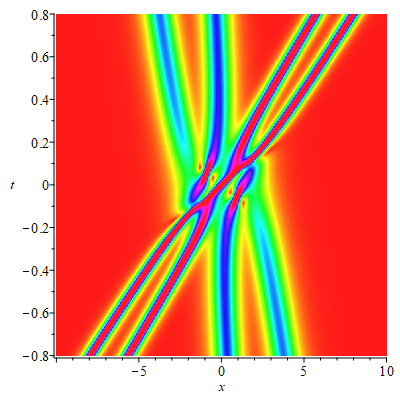}\\
$(a)$\qquad \qquad \qquad \qquad\qquad \qquad \qquad \qquad$(b)$\qquad \qquad\qquad \qquad \qquad $(c)$\qquad  \qquad \qquad \qquad \qquad$(d)$
\caption{(Color online)(a) Mixed solutions of double zeros and single zero for $|q|$ and $n_1=2,n_2=1,\zeta_1=1+i,a_1=b_1=1,\zeta_2=\frac{1}{2}+i,a_2=b_2=1$. (b)Density plot of single-double-zeros. (c) Mixed solutions of double zeros and double zeros for $|q|$ and $n_1=2,n_2=2,\zeta_1=1+i,a_1=b_1=1,\zeta_2=1+\frac{1}{2}i,a_2=b_2=1$. (d) Density plot of double-double  zeros.}\label{f4}
\end{figure}

\section{ Conclusions  and discussions}
In a word, the inverse scattering method is applied to the higher-order KN equation with vanishing boundary, and the soliton matrix is constructed by studying the corresponding RHP. Using RHP regularization of finite simple zeros, the determinant form of general N-solitons of higher-order KN equation without reflection is obtained, which is different from the soliton solution form of previous KN system. In the process of inverse scattering, the potential function is recovered when the spectral parameter tends to zero, which effectively avoids the appearance of implicit function\cite{JP-2021-ar}. At the same time, the properties of the single soliton solution and the collision dynamics and asymptotic behavior of the two soliton solution are investigated.

In addition, the multiple zeros of RHP are considered, and the higher-order soliton matrix of higher-order KN equation is obtained by limit technique. Several typical graphs are given, including the graphs of double zeros solutions, trip zeros solutions,  single-double zeros solutions, and double-double zeros solutions. It provides a good basis for future experimental observation.

In this context, we only consider the solution with vanishing boundary conditions. How to adjust the analysis to find the Jost solution of the spectral problem, so as to obtain the solution with non vanishing boundary conditions. The global well posedness, long-time behavior and asymptotic stability of solitons need to be further studied.

\end{document}